\begin{document}

\title{Channel Estimation for Millimeter Wave Multiuser MIMO Systems via PARAFAC Decomposition}

\author{Zhou Zhou, Jun Fang, Linxiao Yang, Hongbin Li, Zhi Chen,
and Shaoqian Li
\thanks{Zhou Zhou, Jun Fang, Linxiao Yang, Zhi Chen and Shaoqian Li are with the National Key Laboratory
of Science and Technology on Communications, University of
Electronic Science and Technology of China, Chengdu 611731, China,
Email: JunFang@uestc.edu.cn}
\thanks{Hongbin Li is
with the Department of Electrical and Computer Engineering,
Stevens Institute of Technology, Hoboken, NJ 07030, USA, E-mail:
Hongbin.Li@stevens.edu}
\thanks{This work was supported in part by the National Science
Foundation of China under Grants 61172114 and 61428103, and the
National Science Foundation under Grant ECCS-1408182. }}

\maketitle

\begin{abstract}
We consider the problem of uplink channel estimation for
millimeter wave (mmWave) systems, where the base station (BS) and
mobile stations (MSs) are equipped with large antenna arrays to
provide sufficient beamforming gain for outdoor wireless
communications. Hybrid analog and digital beamforming structures
are employed by both the BS and the MS due to hardware
constraints. We propose a layered pilot transmission scheme and a
CANDECOMP/PARAFAC (CP) decomposition-based method for joint
estimation of the channels from multiple users (i.e. MSs) to the
BS. The proposed method exploits the sparse scattering nature of
the mmWave channel and the intrinsic multi-dimensional structure
of the multiway data collected from multiple modes. The uniqueness
of the CP decomposition is studied and sufficient conditions for
essential uniqueness are obtained. The conditions shed light on
the design of the beamforming matrix, the combining matrix and the
pilot sequences, and meanwhile provide general guidelines for
choosing system parameters. Our analysis reveals that our proposed
method can achieve a substantial training overhead reduction by
employing the layered pilot transmission scheme. Simulation
results show that the proposed method presents a clear advantage
over a compressed sensing-based method in terms of both estimation
accuracy and computational complexity.
\end{abstract}


\begin{keywords}
Mm-Wave systems, channel estimation, CANDECOMP/PARAFAC (CP)
decomposition, compressed sensing.
\end{keywords}



\section{Introduction}
Millimeter-wave (mmWave) communication is a promising technology
for future 5G cellular networks \cite{RanganRappaport14}. It has
the potential to offer gigabit-per-second data rates by exploiting
the large bandwidth available at mmWave frequencies. However,
communication at such high frequencies also suffers from high
attenuation and signal absorption \cite{SwindlehurstAyanoglu14}.
To compensate for the significant path loss, very large antenna
arrays can be used at the base station (BS) and the mobile station
(BS) to exploit beam steering to increase the link gain
\cite{AlkhateebMo14}. Due to the small wavelength at the mmWave
frequencies, the antenna size is very small and a large number of
array elements can be packed into a small area. Directional
precoding/beamforming with large antenna arrays is essential for
providing sufficient beamforming gain for mmWave communications.
On the other hand, the design of the precoding matrix requires
complete channel state information. Reliable mmWave channel
estimation, however, is challenging due to the large number of
antennas and the low signal-to-noise ratio (SNR) before
beamforming. The problem becomes exacerbated when considering
multi-user MIMO systems. Multi-user MIMO operation was advocated
in \cite{Marzetta10} which considers a single-cell time-division
duplex (TDD) scenario. The time-slot over which the channel can be
assumed constant is divided between uplink pilot transmission and
downlink data transmission. The BS, through channel reciprocity,
obtains an estimate of the downlink channel, and then generates a
linear precoder for transmitting data to multiple terminals
simultaneously. The time required for pilots, in this case,
increases linearly with the number of terminals served.




The sparse scattering nature of the mm-Wave channel can be
utilized to reduce the training overhead for channel estimation
\cite{AlkhateebyLeus15,AlkhateebAyach14,SchniterSayeed14}.
Specifically, it was shown \cite{AlkhateebyLeus15} that compressed
sensing-based methods achieve a significant training overhead
reduction via leveraging the poor scattering nature of mmWave
channels. In \cite{AlkhateebAyach14}, a novel hierarchical
multi-resolution beamforming codebook and an adaptive compressed
sensing method were proposed for channel estimation. The main idea
of adaptive compressed sensing-based channel estimation method is
to divide the training process into a number of stages, with the
training precoding used at each stage determined by the output of
earlier stages. Compared to the standard compressed sensing
method, the adaptive method is more efficient and yields better
performance at low signal-to-noise ratio (SNR). Nevertheless, this
performance improvement requires a feedback channel from the MS to
the BS, which may not be available before the communication
between the BS and the MS is established. Channel estimation and
precoding design for mmWave communications were also considered in
\cite{SchniterSayeed14}, where aperture shaping was used to ensure
a sparse virtual-domain MIMO channel representation.




In this paper, we consider the problem of multi-user uplink mmWave
channel estimation. Such a problem arises in multi-user massive
MIMO systems \cite{BradyBehdad13,AlkhateebLeus15b} where the BS,
via spatial multiplexing, simultaneously serves a number of
independent users sharing the same time-frequency bandwidth, and
thus requires to acquire the channel state information of multiple
users via uplink pilots (channel reciprocity is assumed). To
jointly estimate channels from multiple users to the BS, we
propose a layered pilot transmission scheme in which the training
phase consists of a number of frames and each frame is divided
into a number of sub-frames. In each sub-frame, users employ a
common beamforming vector to simultaneously transmit their
respective pilot symbols. With this layered transmission scheme,
the received signal at the BS can be represented as a third-order
tensor. We show that the third-order tensor admits a
CANDECOMP/PARAFAC (CP) decomposition and the channels can be
estimated from the CP factor matrices. Uniqueness of the CP
decomposition is studied. Our analysis shows that our proposed
method can achieve an additional training overhead reduction as
compared with a conventional scheme which separately estimates
multiple users' channels. We also compare our proposed method with
a compressed sensing-based method for joint channel estimation.
Simulation results show that the proposed method presents a clear
advantage over the compressed sensing-based method in terms of
both estimation accuracy and computational complexity.


We note that mutlilinear tensor algebra, as a powerful tool, has
been widely used in a variety of applications in signal processing
and wireless communications, such as multiuser detection in
direct-sequence code-division multiple access (DS-CDMA)
\cite{SidiropoulosGiannakis00}, blind spatial signature estimation
\cite{RongVorobyov05}, two-way relaying MIMO communications
\cite{RoemerHaardt10}, etc. In particular, the uniqueness of CP
decomposition has proven useful in solving many array processing
problems from the multiple invariance sensor array processing
\cite{SidiropoulosBro00} to the detection and localization of
multiple targets in MIMO radar \cite{NionSidiropoulos10}. Another
important application is the multidimensional harmonic retrieval,
where significant improvements of parameter estimation accuracy
can be achieved by using multilinear algebra
\cite{HaardtRoemer08}. Recent years have seen a resurgence of
interest in tensor \cite{CichockiMandic15}, motivated by a number
of applications involving real-world multiway data.


The rest of the paper is organized as follows. In Section
\ref{sec:system-model}, we introduce the system model and a
layered pilot transmission scheme. Section
\ref{sec:tensor-overview} provides notations and basics on
tensors. In Section \ref{sec:proposed-method}, a tensor
decomposition-based method is developed for jointly estimating the
channels from multiple users to the BS. The uniqueness of the CP
decomposition is studied and sufficient conditions for the
uniqueness of the CP decomposition are derived in Section
\ref{sec:uniqueness-analysis}. A compressed sensing-based channel
estimation method is discussed in Section \ref{sec:cs-method}.
Computational complexity of the proposed method and the compressed
sensing-based method is analyzed in Section
\ref{sec:complexity-analysis}. Simulation results are provided in
Section \ref{sec:experiments}, followed by concluding remarks in
Section \ref{sec:conclusion}.


\section{System Model and Problem Formulation}\label{sec:system-model}
Consider a mmWave system consisting of a base station (BS) and $U$
mobile stations (MSs). We assume that hybrid analog and digital
beamforming structures (Fig. \ref{fig1}) are employed by both the
BS and the MS. The BS is equipped with $N_{\text{BS}}$ antennas
and $M_{\text{BS}}$ RF chains, and each MS is equipped with
$N_{\text{MS}}$ antennas and $M_{\text{MS}}$ RF chains. Since the
RF chain is expensive and power consuming, the number of RF chains
is usually less than the number of antennas, i.e.
$M_{\text{BS}}<N_{\text{BS}}$ and $M_{\text{MS}}<N_{\text{MS}}$.
We also assume $M_{\text{MS}}=1$, i.e. each user only transmits
one data stream.

In this paper, we consider the problem of estimating the uplink
mmWave channels from users to the BS. MmWave channels are expected
to have very limited scattering. Measurement campaigns in
dense-urban NLOS environments reveals that mmWave channels
typically exhibit only 3-4 scattering clusters, with relatively
little delay/angle spreading within each cluster
\cite{AkdenizLiu14}. Following \cite{AlkhateebAyach14}, we assume
a geometric channel model with $L_u$ scatterers between the $u$th
user and the BS. Under this model, the channel from the $u$th user
to the BS can be expressed as
\begin{align}
\boldsymbol{H}_u=\sum_{l=1}^{L_u}\alpha_{u,l}\boldsymbol{a}_{\text{BS}}(\theta_{u,l})\boldsymbol{a}_{\text{MS}}^T(\phi_{u,l})
\label{channel-model}
\end{align}
where $\alpha_{u,l}$ is the complex path gain associated with the
$l$th path of the $u$th user, $\theta_{u,l}\in [0,2\pi]$ and
$\phi_{u,l}\in [0,2\pi]$ are the associated azimuth angle of
arrival (AoA) and azimuth angle of departure (AoD), respectively,
$\boldsymbol{a}_{\text{BS}}(\theta_{u,l})$ and
$\boldsymbol{a}_{\text{MS}}(\phi_{u,l})$ denote the antenna array
response vectors associated with the BS and the MS, respectively.
In this paper, for simplicity, a uniform linear array is assumed,
though its extension to arbitrary antenna arrays is possible. The
steering vectors at the BS and the MS can thus be written as
follows respectively
\begin{align}
&\boldsymbol{a}_{\text{BS}}(\theta_{u,l}) \nonumber\\
\triangleq&\frac{1}{\sqrt{N_{\text{BS}}}}[1\phantom{0}e^{j(2\pi/\lambda)d\text{sin}(\theta_{u,l})}
\phantom{0}\ldots \phantom{0}
e^{j(N_{\text{BS}}-1)(2\pi/\lambda)d\text{sin}(\theta_{u,l})}]^T
\nonumber\\
&\boldsymbol{a}_{\text{MS}}(\phi_{u,l}) \nonumber\\
\triangleq&\frac{1}{\sqrt{N_{\text{MS}}}}[1\phantom{0}e^{j(2\pi/\lambda)d\text{sin}(\phi_{u,l})}
\phantom{0}\ldots \phantom{0}
e^{j(N_{\text{MS}}-1)(2\pi/\lambda)d\text{sin}(\phi_{u,l})}]^T
\nonumber
\end{align}
where $\lambda$ is the signal wavelength, and $d$ denotes the
distance between neighboring antenna elements.






Note that the problem of single-user mmWave channel estimation has
been studied in \cite{AlkhateebAyach14,AlkhateebyLeus15}.
Specifically, to estimate the downlink channel, the BS employs $P$
different beamforming vectors at $P$ successive time frames, and
at each time frame, the MS uses $Q$ combining vectors to detect
the signal transmitted over each beamforming vector. By exploiting
the sparse scattering nature of mmWave channels, the problem of
estimating the mmWave channel can be formulated as a sparse signal
recovery problem and the training overhead can be considerably
reduced. The above method can also be used to solve our uplink
channel estimation problem if channels from users to the BS are
estimated separately. Nevertheless, we will show that a joint
estimation (of multiusers' channels) scheme may lead to an
additional training overhead reduction.

We first propose a layered pilot transmission scheme which is
elaborated as follows. The training phase consists of $T$
consecutive frames, and each frame is divided into $T'$
sub-frames. In each sub-frame $t'=1,\ldots,T'$, users employ a
common beamforming vector $\boldsymbol{p}_{t'}$ to simultaneously
transmit their respective pilot symbols $s_{u,t}$, where $s_{u,t}$
denotes the pilot symbol used by the $u$th user at the $t$th
frame. At the BS, the transmitted signal can be received
simultaneously via $M_{\text{BS}}$ RF chains associated with
different receiving vectors
$\{\boldsymbol{q}_{m}\}_{m=1}^{M_{\text{BS}}}$. Therefore the
signal received by the $m$th RF chain at the $t'$th sub-frame of
the $t$th frame can be expressed as
\begin{align}
y_{m,t',t}&=\boldsymbol{q}_{m}^T\sum_{u=1}^U\boldsymbol{H}_u\boldsymbol{p}_{t'}s_{u,t}+
w_{m,t',t}
 \label{data-model}
\end{align}
where $w_{m,t',t}$ denotes the additive white Gaussian noise
associated with the $m$th RF chain at the $t'$th sub-frame of the
$t$th frame. Our objective is to estimate the channels
$\{\boldsymbol{H}_u\}$ from the received signal $\{y_{m,t',t}\}$.
We wish to achieve a reliable channel estimation by using as few
measurements as possible. Particularly the number of pilot symbols
$T$ is assumed to be less than $U$, i.e. $T<U$, otherwise
orthogonal pilots can be employed and the joint channel estimation
problem can be decomposed as a number of single-user channel
estimation problems. In the following, we show that the received
data can be represented as a tensor and such a representation
allows a more efficient algorithm to extract the channel state
information with minimum number of measurements. Before
proceeding, we first provide a brief review of tensor and the
CANDECOMP/PARAFAC (CP) decomposition.

\begin{figure}[!t]
\centering
\includegraphics[width=8cm]{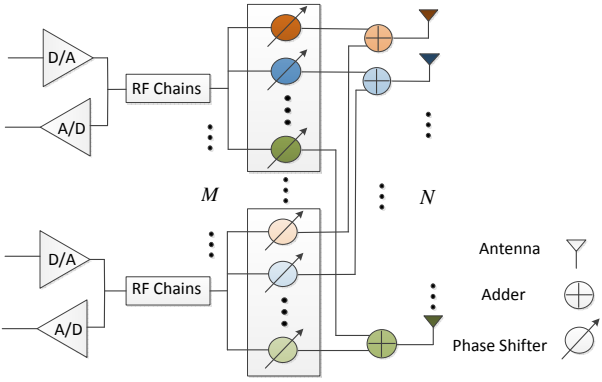}
\caption{The hybrid precoding structure for the base station and
the mobile station.} \label{fig1}
\end{figure}

\section{Preliminaries}\label{sec:tensor-overview}
We first provide a brief review on tensor and the CP
decomposition. A tensor is a generalization of a matrix to
higher-order dimensions, also known as ways or modes. Vectors and
matrices can be viewed as special cases of tensors with one and
two modes, respectively. Throughout this paper, we use symbols
$\otimes$ , $\circ$ , $\odot$ and $\ast$ to denote the Kronecker, outer,
Khatri-Rao and Hadamard product, respectively.

Let $\boldsymbol{\mathcal{X}}\in\mathbb{R}^{I_1\times
I_2\times\cdots\times I_N}$ denote an $N$th order tensor with its
$(i_1,\ldots,i_N)$th entry denoted by $\mathcal{X}_{i_1\cdots
i_N}$. Here the order $N$ of a tensor is the number of dimensions.
Fibers are higher-order analogue of matrix rows and columns. The
mode-$n$ fibers of $\boldsymbol{\mathcal{X}}$ are
$I_n$-dimensional vectors obtained by fixing every index but
$i_n$. Unfolding or matricization is an operation that turns a
tensor to a matrix. Specifically, the mode-$n$ unfolding of a
tensor $\boldsymbol{\mathcal{X}}$, denoted as
$\boldsymbol{X}_{(n)}$, arranges the mode-$n$ fibers to be the
columns of the resulting matrix. For notational convenience, we
also use the notation
$\textrm{unfold}_n(\boldsymbol{\mathcal{X}})$ to denote the
unfolding operation along the $n$-th mode. The $n$-mode product of
$\boldsymbol{\mathcal{X}}$ with a matrix
$\boldsymbol{A}\in\mathbb{R}^{J\times I_n}$ is denoted by
$\boldsymbol{\mathcal{X}}\times_n\boldsymbol{A}$ and is of size
$I_1\cdots\times I_{n-1}\times J\times I_{n+1}\times\cdots\times
I_N$, with each mode-$n$ fiber multiplied by the matrix
$\boldsymbol{A}$, i.e.
\begin{align}
\boldsymbol{\mathcal{Y}}=\boldsymbol{\mathcal{X}}\times_n\boldsymbol{A}\Leftrightarrow
\boldsymbol{Y}_{(n)}=\boldsymbol{A}\boldsymbol{X}_{(n)}
\end{align}

\begin{figure}[!t]
\centering
\includegraphics[width=9cm]{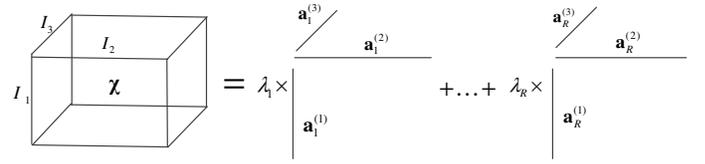}
\caption{Schematic of CP decomposition.} \label{fig-CP}
\end{figure}

The CP decomposition decomposes a tensor into a sum of rank-one
component tensors (see Fig. \ref{fig-CP}), i.e.
\begin{align}
\boldsymbol{\mathcal{X}}=
\sum\limits_{r=1}^{R}\lambda_r\boldsymbol{a}_r^{(1)}\circ\boldsymbol{a}_r^{(2)}\circ\cdots\circ\boldsymbol{a}_r^{(N)}
\end{align}
where $\boldsymbol{a}_r^{(n)}\in\mathbb{R}^{I_n}$, `$\circ$'
denotes the vector outer product, the minimum achievable $R$ is
referred to as the rank of the tensor, and
$\boldsymbol{A}^{(n)}\triangleq
[\boldsymbol{a}_{1}^{(n)}\phantom{0}\ldots\phantom{0}\boldsymbol{a}_{R}^{(n)}]\in\mathbb{R}^{I_n\times
R}$ denotes the factor matrix along the $n$-th mode. Elementwise,
we have
\begin{align}
\mathcal{X}_{i_1 i_2\cdots i_N}=\sum\limits_{r=1}^{R}\lambda_r
a_{i_1 r}^{(1)}a_{i_2 r}^{(2)}\cdots a_{i_N r}^{(N)}
\end{align}
The mode-$n$ unfolding of $\boldsymbol{\mathcal{X}}$ can be
expressed as
\begin{align}
\boldsymbol{X}_{(n)}=\boldsymbol{A}^{(n)}\boldsymbol{\Lambda}\left(\boldsymbol{A}^{(N)}
\odot\cdots\boldsymbol{A}^{(n+1)}\odot\boldsymbol{A}^{(n-1)}\odot\cdots\boldsymbol{A}^{(1)}\right)^T
\end{align}
where
$\boldsymbol{\Lambda}\triangleq\text{diag}(\lambda_1,\ldots,\lambda_R)$.
The inner product of two tensors with the same size is defined as
\begin{align}
\langle\boldsymbol{\mathcal{X}},\boldsymbol{\mathcal{Y}}\rangle =
\sum\limits_{i_1=1}^{I_1}\sum\limits_{i_2=1}^{I_2}\cdots
\sum\limits_{i_N=1}^{I_N} x_{i_1 i_2 \dots i_N} y_{i_1 i_2 \dots
i_N} \nonumber
\end{align}
The Frobenius norm of a tensor $\boldsymbol{\mathcal{X}}$ is the
square root of the inner product with itself, i.e.
\begin{align}
\|\boldsymbol{\mathcal{X}}\|_F=\langle\boldsymbol{\mathcal{X}},\boldsymbol{\mathcal{X}}\rangle^{\frac{1}{2}}
\nonumber
\end{align}


\section{Proposed CP Decomposition-Based Channel Estimation Method}\label{sec:proposed-method}
Tensors provide a natural representation of data with multiple
modes. Note that in our data model, the received signal
$y_{m,t',t}$ has three modes which respectively stand for the RF
chain, the sub-frame and the frame. Therefore the received data
$\{y_{m,t',t}\}$ can be naturally represented by a three-mode
tensor $\boldsymbol{\mathcal{Y}}\in\mathbb{R}^{M_{\text{BS}}\times
T'\times T}$, with its $(m,t',t)$th entry given by $y_{m,t',t}$.
Combining (\ref{channel-model}) and (\ref{data-model}),
$y_{m,t',t}$ can be rewritten as
\begin{align}\label{model}
y_{m,t',t}=&\sum_{u=1}^U\sum_{j=1}^{L_u}
\alpha_{u,j}\boldsymbol{q}_{m}^T\boldsymbol{a}_{\text{BS}}(\theta_{u,j})\boldsymbol{a}_{\text{MS}}^T(\phi_{u,j})
\boldsymbol{p}_{t'}s_{u,t}+
w_{m,t',t} \nonumber\\
=&
\sum_{l=1}^L\alpha_{l}\boldsymbol{q}_{m}^T\boldsymbol{a}_{\text{BS}}(\theta_{l})\boldsymbol{a}_{\text{MS}}^T(\phi_{l})
\boldsymbol{p}_{t'}\bar{s}_{l,t}+w_{m,t',t}
\end{align}
where with a slight abuse of notation, we let
$\alpha_{l}=\alpha_{u,j}$, $\theta_{l}=\theta_{u,j}$, and
$\phi_{l}=\phi_{u,j}$, in which $l=\sum_{i=1}^{u-1}L_i+j$;
$L\triangleq\sum_{u=1}^U L_u$ denotes the total number of paths
associated with all users, and $\bar{s}_{l,t}=s_{u,t}$ if the
$l$th path comes from the $u$th user, i.e.
\begin{align}
\bar{s}_{l,t}=s_{u,t} \quad \forall l\in
\left[\sum_{i=1}^{u-1}L_{i}+1, \sum_{i=1}^{u}L_{i}\right]
\end{align}
Define
\begin{align}
\boldsymbol{Q}\triangleq &
[\boldsymbol{q}_1\phantom{0}\ldots\phantom{0}\boldsymbol{q}_{M_{\text{BS}}}]
\nonumber\\
\boldsymbol{P}\triangleq &
[\boldsymbol{p}_1\phantom{0}\ldots\phantom{0}\boldsymbol{p}_{T'}]
\nonumber
\end{align}
Since both $\boldsymbol{Q}$ and $\boldsymbol{P}$ are implemented
using analog phase shifters, their entries are of constant
modulus. Let $\boldsymbol{Y}_t\in\mathbb{R}^{M_{\text{BS}}\times
T'}$ denote a matrix obtained by fixing the index $t$ of the
tensor $\boldsymbol{\mathcal{Y}}$, we have
\begin{align}
\boldsymbol{Y}_t=&\sum_{l=1}^L \alpha_l \bar{s}_{l,t}
\boldsymbol{Q}^T\boldsymbol{a}_{\text{BS}}(\theta_{l})\boldsymbol{a}_{\text{MS}}^T(\phi_{l})
\boldsymbol{P}+ \boldsymbol{W}_t \nonumber\\
=& \sum_{l=1}^L \bar{s}_{l,t}
\boldsymbol{\tilde{a}}_{\text{BS}}(\theta_{l})\boldsymbol{\tilde{a}}_{\text{MS}}^T(\phi_{l})+
\boldsymbol{W}_t \label{Y-slice}
\end{align}
where
\begin{align}
\boldsymbol{\tilde{a}}_{\text{BS}}(\theta_{l})\triangleq&\alpha_l\boldsymbol{Q}^T\boldsymbol{a}_{\text{BS}}(\theta_{l})
\nonumber\\
\boldsymbol{\tilde{a}}_{\text{MS}}(\phi_{l})\triangleq&\boldsymbol{P}^T\boldsymbol{a}_{\text{MS}}(\phi_{l})
\nonumber
\end{align}
Since each slice of $\boldsymbol{\mathcal{Y}}$,
$\boldsymbol{Y}_t$, is a weighted sum of a common set of rank-one
outer products, the tensor $\boldsymbol{\mathcal{Y}}$ thus admits
the following CP decomposition which decomposes a tensor into a
sum of rank-one component tensors, i.e.
\begin{align}
\boldsymbol{\mathcal{Y}}=\sum_{l=1}^L
\boldsymbol{\tilde{a}}_{\text{BS}}(\theta_l)\circ\boldsymbol{\tilde{a}}_{\text{MS}}(\phi_l)\circ{\bar{\boldsymbol{s}}}_l
+\boldsymbol{\mathcal{W}} \label{CP}
\end{align}
where $\bar{\boldsymbol{s}}_l\triangleq
[\bar{s}_{l,1}\phantom{0}\cdots\phantom{0} \bar{s}_{l,T}]^T$.
Define
\begin{align}
\boldsymbol{A}_{Q}\triangleq &
[\boldsymbol{\tilde{a}}_{\text{BS}}(\theta_1)\phantom{0}\cdots\phantom{0}\boldsymbol{\tilde{a}}_{\text{BS}}(\theta_L)]
\label{AQ-definition}\\
\boldsymbol{A}_{P}\triangleq &
[\boldsymbol{\tilde{a}}_{\text{MS}}(\phi_1)\phantom{0}\cdots\phantom{0}\boldsymbol{\tilde{a}}_{\text{MS}}(\phi_L)]
\label{AP-definition}\\
{\boldsymbol{S}}_L\triangleq &
[\bar{\boldsymbol{s}}_1\phantom{0}\cdots\phantom{0}\bar{\boldsymbol{s}}_L]
\label{SL-definition}
\end{align}
Clearly,
$\{\boldsymbol{A}_{Q},\boldsymbol{A}_{P},\boldsymbol{S}_L\}$ are
factor matrices associated with a noiseless version of
$\boldsymbol{\mathcal{Y}}$. Let
\begin{align}
\boldsymbol{S}\triangleq
[\boldsymbol{s}_1\phantom{0}\cdots\phantom{0}\boldsymbol{s}_U]
\label{S-definition}
\end{align}
where
\begin{align}
{\boldsymbol{s}}_u\triangleq
[{s}_{u,1}\phantom{0}\cdots\phantom{0} {s}_{u,T}]^T
\label{su-definition}
\end{align}
then we have $\boldsymbol{S}_L=\boldsymbol{S}\boldsymbol{O}$,
where
\begin{align}
\boldsymbol{O} \triangleq \left[ {\begin{array}{*{20}{c}}
    {{\boldsymbol 1}_{{L_1}}^T}&\boldsymbol{0}& \cdots &\boldsymbol{0}\\
    \boldsymbol{0}&{{\boldsymbol 1}_{{L_2}}^T}& \cdots &\boldsymbol{0}\\
    \vdots & \vdots & \ddots & \vdots \\
    \boldsymbol{0}&\boldsymbol{0}& \cdots &{{\boldsymbol 1}_{{L_U}}^T}
    \end{array}} \right] \label{O}
\end{align}
where $\boldsymbol{1}_{l}$ denotes an $l$-dimensional column
vector with all entries equal to one. Equation (\ref{CP}) suggests
that an estimate of the mmWave channels $\{\boldsymbol{H}_u\}$ can
be obtained by performing a CP decomposition of the tensor
$\boldsymbol{\mathcal{Y}}$.

\subsection{CP Decomposition}\label{subsec:CP-decomposition}
Given that the number of total paths, $L$, is known \emph{a
priori}\footnote{This could be the case if there is only a direct
line-of-sight path between each user and the BS, in which case we
have $L=U$.}, the CP decomposition can be accomplished by solving
the following optimization problem
\begin{align}
\min_{\boldsymbol{A}_{Q},\boldsymbol{A}_{P},\boldsymbol{S}_L}\quad
\|\boldsymbol{\mathcal{Y}}-\sum_{l=1}^L
\boldsymbol{\tilde{a}}_{\text{BS}}(\theta_l)
\circ\boldsymbol{\tilde{a}}_{\text{MS}}(\phi_l)\circ{\bar{\boldsymbol{s}}}_l\|_{F}^2
\label{opt-1}
\end{align}
The above optimization can be efficiently solved by an alternating
least squares (ALS) procedure which iteratively minimizes the data
fitting error with respect to the three factor matrices:
\begin{align}
\boldsymbol{A}_{Q}^{(t+1)}=&\arg\min_{\boldsymbol{A}_{Q}}
\left\|\boldsymbol{Y}_{(1)}^T-(\boldsymbol{S}^{(t)}_L\odot\boldsymbol{A}_{P}^{(t)})\boldsymbol{A}_{Q}^T
\right\|_F^2 \label{AQ-update} \\
\boldsymbol{A}_{P}^{(t+1)}=&\arg\min_{\boldsymbol{A}_{P}}
\left\|\boldsymbol{Y}_{(2)}^T-(\boldsymbol{S}^{(t)}_L\odot\boldsymbol{A}_{Q}^{(t+1)})\boldsymbol{A}_{P}^T
\right\|_F^2 \label{AP-update} \\
{\boldsymbol{S}}^{(t+1)}_L=&\arg\min_{{\boldsymbol{S}_L}}
\left\|\boldsymbol{Y}_{(3)}^T-(\boldsymbol{A}_{P}^{(t+1)}
\odot\boldsymbol{A}_{Q}^{(t+1)})\boldsymbol{S}_L^T \right\|_F^2
\label{S-update}
\end{align}

For the general case where the total number of paths $L$ is
unknown \emph{a priori}, more sophisticated CP decomposition
techniques can be used to jointly estimate the model order and the
factor matrices. Since $L$ is usually small relative to the
dimensions of the tensor, the factorization (\ref{CP}) implies
that the tensor $\boldsymbol{\mathcal{Y}}$ has a low-rank
structure. Hence the CP decomposition can be cast as a rank
minimization problem as
\begin{align}
\min_{\boldsymbol{\mathcal{X}}}&\quad
\text{rank}(\boldsymbol{\mathcal{X}}) \nonumber\\
\text{s.t.}&\quad \|\boldsymbol{\mathcal{Y}}-
\boldsymbol{\mathcal{X}}\|_F^2\leq\varepsilon \label{opt-2}
\end{align}
where $\varepsilon$ is an error tolerance parameter related to
noise statistics. Note that the CP rank is the minimum number of
rank-one tensor components required to represent the tensor. Thus
the search for a low rank $\boldsymbol{\mathcal{X}}$ can be
converted to the optimization of its associated factor matrices.
Let
\begin{align}
\boldsymbol{\mathcal{X}}=\sum\limits_{k=1}^{K}\boldsymbol{a}_k\circ
\boldsymbol{b}_k\circ\boldsymbol{c}_k
\end{align}
where $K\gg L$ denotes an upper bound of the total number of
paths, and
\begin{align}
\boldsymbol{A}\triangleq &
[\boldsymbol{a}_1\phantom{0}\ldots\phantom{0}\boldsymbol{a}_{K}]
\nonumber\\
\boldsymbol{B}\triangleq &
[\boldsymbol{b}_1\phantom{0}\ldots\phantom{0}\boldsymbol{b}_{K}]
\nonumber\\
\boldsymbol{C}\triangleq &
[\boldsymbol{c}_1\phantom{0}\ldots\phantom{0}\boldsymbol{c}_{K}]
\nonumber
\end{align}
The optimization (\ref{opt-2}) can be re-expressed as
\begin{align}
\min_{\boldsymbol{A},\boldsymbol{B},\boldsymbol{C}}&\quad
\|\boldsymbol{z}\|_0 \nonumber\\
\text{s.t.} &\quad \|\boldsymbol{\mathcal{Y}}-
\boldsymbol{\mathcal{X}}\|_F^2\leq\varepsilon
\nonumber\\
&\quad
\boldsymbol{\mathcal{X}}=\sum\limits_{k=1}^{K}\boldsymbol{a}_k\circ
\boldsymbol{b}_k\circ\boldsymbol{c}_k \label{opt-3}
\end{align}
where $\boldsymbol{z}$ is a $K$-dimensional vector with its $k$th
entry given by
\begin{align}
z_{k}\triangleq\|\boldsymbol{a}_k\circ
\boldsymbol{b}_k\circ\boldsymbol{c}_k\|_{F}
\end{align}
We see that $\|\boldsymbol{z}\|_{0}$ equals to the number of
nonzero rank-one tensor components. Therefore minimizing the
$\ell_0$-norm of $\boldsymbol{z}$ is equivalent to minimizing the
rank of the tensor $\boldsymbol{\mathcal{X}}$.

The optimization (\ref{opt-3}) is an NP-hard problem.
Nevertheless, alternative sparsity-promoting functions such as
$\ell_1$-norm can be used to replace $\ell_0$-norm to find a
sparse solution of $\boldsymbol{z}$ more efficiently. In this
paper, we use $\|\cdot \|_{2/3}$ as the relaxation of
$\|\cdot\|_0$. From \cite{BazerqueMateos13}, we know that
$(\|\boldsymbol{z}\|_{2/3})^{3/2}=\|\boldsymbol{\mathcal{X}}\|_{\ast}$,
where
\begin{align}
\|\boldsymbol{\mathcal{X}}\|_{\ast}\triangleq
\text{tr}({\boldsymbol{A}}{{\boldsymbol{A}}^H}) +
\text{tr}({\boldsymbol{B}}{{\boldsymbol{B}}^H}) +
\text{tr}({\boldsymbol{C}}{{\boldsymbol{C}}^H}) \nonumber
\end{align}
Thus (\ref{opt-3}) can be relaxed as the following optimization
problem
\begin{align}\label{opt-1}
\mathop {\min
}\limits_{\boldsymbol{A},\boldsymbol{B},\boldsymbol{C}}\quad
&\left\| {\boldsymbol{\mathcal Y} -\boldsymbol{\mathcal X}}
\right\|_F^2 +
\mu {\left\| \boldsymbol{\mathcal{X}} \right\|_*} \nonumber\\
\text{s.t.}\quad &
\boldsymbol{\mathcal{X}}=\sum\limits_{k=1}^{K}\boldsymbol{a}_k\circ
\boldsymbol{b}_k\circ\boldsymbol{c}_k
\end{align}
where $\mu$ is a regularization parameter whose choice will be
discussed later in this paper. Again, the above optimization can
be efficiently solved by an alternating least squares (ALS)
procedure which iteratively minimizes (\ref{opt-1}) with respect
to the three factor matrices:
\begin{equation}
{{\boldsymbol{A}}^{(t + 1)}} = \mathop {\arg \min
}\limits_{\boldsymbol{A}} \left\| {\left[ {\begin{array}{*{20}{c}}
{\boldsymbol{Y}_{(1)}^T}\\
\bf 0
\end{array}} \right] - \left[ {\begin{array}{*{20}{c}}
{{{\boldsymbol{C}}^{(t)}} \odot {{\boldsymbol{B}}^{(t)}}}\\
{\sqrt \mu  {\boldsymbol{I}}}
\end{array}} \right]{{\boldsymbol{A}}^T}} \right\|_F^2
\label{updateA}
\end{equation}

\begin{equation}
{{\boldsymbol{B}}^{(t + 1)}} = \mathop {\arg \min
}\limits_{\boldsymbol{B}} \left\| {\left[ {\begin{array}{*{20}{c}}
{\boldsymbol{Y}_{(2)}^T}\\
\bf 0
\end{array}} \right] - \left[ {\begin{array}{*{20}{c}}
{{{\boldsymbol{C}}^{(t)}} \odot {{\boldsymbol{A}}^{(t+1)}}}\\
{\sqrt \mu  {\boldsymbol{I}}}
\end{array}} \right]{{\boldsymbol{B}}^T}} \right\|_F^2 \label{updateB}
\end{equation}

\begin{equation}
{{\boldsymbol{C}}^{(t + 1)}} = \mathop {\arg \min
}\limits_{\boldsymbol{C}} \left\| {\left[ {\begin{array}{*{20}{c}}
{\boldsymbol{Y}_{(3)}^T}\\
\bf 0
\end{array}} \right] - \left[ {\begin{array}{*{20}{c}}
{{{\boldsymbol{B}}^{(t+1)}} \odot {{\boldsymbol{A}}^{(t+1)}}}\\
{\sqrt \mu  {\boldsymbol{I}}}
\end{array}} \right]{{\boldsymbol{C}}^T}} \right\|_F^2 \label{updateC}
\end{equation}
We can repeat the above iterations until the difference between
estimated factor matrices of successive iterations is negligible,
i.e. smaller than a pre-specified tolerance value. The rank of the
tensor can be estimated by removing those negligible rank-one
tensor components. Note that during the decomposition, we do not
need to impose a specific structure on the estimates of the factor
matrices since the CP decomposition is unique under very mild
conditions.

\subsection{Channel Estimation}
We now discuss how to estimate the mmWave channel based on the
estimated factor matrices
$\{\boldsymbol{\hat{A}}_{Q},\boldsymbol{\hat{A}}_{P},\boldsymbol{\hat{S}}_L\}$.
As to be shown in
(\ref{factor-matrix-relation1})--(\ref{factor-matrix-relation4}),
under a mild condition, the estimated factor matrices and the true
factor matrices are related as follows
\begin{align}
\label{AQ}
\boldsymbol{\hat{A}}_{Q}=&\boldsymbol{A}_{Q}\boldsymbol{\Lambda}_1\boldsymbol{\Pi}
+\boldsymbol{E}_1 \\
\label{AP}
\boldsymbol{\hat{A}}_{P}=&\boldsymbol{A}_{P}\boldsymbol{\Lambda}_2\boldsymbol{\Pi}
+\boldsymbol{E}_2 \\
\label{Sprime}
\boldsymbol{\hat{S}}_L=&\boldsymbol{S}_L\boldsymbol{\Lambda}_3\boldsymbol{\Pi}
+\boldsymbol{E}_3
\end{align}
where $\boldsymbol{\Lambda}_3$ is a nonsingular diagonal matrix,
$\boldsymbol{\Lambda}_1$ and $\boldsymbol{\Lambda}_2$ are
nonsingular block diagonal matrices compatible with the block
structure of $\boldsymbol{A}_{Q}$ and $\boldsymbol{A}_{P}$,
respectively, and we have
$\boldsymbol{\Lambda}_1\boldsymbol{\Lambda}_3\boldsymbol{\Lambda}_2^T=\boldsymbol{I}$;
$\boldsymbol{\Pi}$
is a permutation matrix, $\boldsymbol{E}_1$, $\boldsymbol{E}_2$,
and $\boldsymbol{E}_3$ denote the estimation errors associated
with the three estimated factor matrices, respectively. Note that
both $\boldsymbol{A}_{Q}$ and $\boldsymbol{A}_{P}$ can be
partitioned into $U$ blocks with each block consisting of column
vectors associated with each user, i.e.
\begin{align}
\boldsymbol{A}_{Q}=&[\boldsymbol{A}_{Q,1}\phantom{0}\boldsymbol{A}_{Q,2}\phantom{0}\ldots\phantom{0}\boldsymbol{A}_{Q,U}]
\\
\boldsymbol{A}_{P}=&[\boldsymbol{A}_{P,1}\phantom{0}\boldsymbol{A}_{P,2}\phantom{0}\ldots\phantom{0}\boldsymbol{A}_{P,U}]
\end{align}
in which
\begin{align}
\boldsymbol{A}_{Q,u}\triangleq &
[\boldsymbol{\tilde{a}}_{\text{BS}}(\theta_{u,1})\phantom{0}\ldots\phantom{0}
\boldsymbol{\tilde{a}}_{\text{BS}}(\theta_{u,L_u})] \label{AQu}
\\
\boldsymbol{A}_{P,u}\triangleq &
[\boldsymbol{\tilde{a}}_{\text{MS}}(\phi_{u,1})\phantom{0}\ldots\phantom{0}\boldsymbol{\tilde{a}}_{\text{MS}}(\phi_{u,L_u})]
\label{APu}
\end{align}
The block-diagonal structure of $\boldsymbol{\Lambda}_1$ and
$\boldsymbol{\Lambda}_2$ is compatible with the block structure of
$\boldsymbol{A}_{Q}$ and $\boldsymbol{A}_{P}$. Thus we have
\begin{align}
\boldsymbol{\Lambda}_1=&\text{diag}(\boldsymbol{\Lambda}_{1}^{(1)},\ldots,\boldsymbol{\Lambda}_{1}^{(U)})
\\
\boldsymbol{\Lambda}_2=&\text{diag}(\boldsymbol{\Lambda}_{2}^{(1)},\ldots,\boldsymbol{\Lambda}_{2}^{(U)})
\end{align}
and
\begin{align}
\boldsymbol{A}_{Q}\boldsymbol{\Lambda}_1=[\boldsymbol{A}_{Q,1}\boldsymbol{\Lambda}_{1}^{(1)}
\phantom{0}\ldots\phantom{0}\boldsymbol{A}_{Q,U}\boldsymbol{\Lambda}_{1}^{(U)}]
\\
\boldsymbol{A}_{P}\boldsymbol{\Lambda}_2=[\boldsymbol{A}_{P,1}\boldsymbol{\Lambda}_{2}^{(1)}
\phantom{0}\ldots\phantom{0}\boldsymbol{A}_{P,U}\boldsymbol{\Lambda}_{2}^{(U)}]
\end{align}
The diagonal matrix $\boldsymbol{\Lambda}_3$ can also be
partitioned according to the structure of $\boldsymbol{\Lambda}_1$
and $\boldsymbol{\Lambda}_2$:
\begin{align}
\boldsymbol{\Lambda}_3=&\text{diag}(\boldsymbol{\Lambda}_{3}^{(1)},\ldots,\boldsymbol{\Lambda}_{3}^{(U)})
\end{align}
From
$\boldsymbol{\Lambda}_1\boldsymbol{\Lambda}_3\boldsymbol{\Lambda}_2^T=\boldsymbol{I}$,
we can readily arrive at
\begin{align}
\boldsymbol{\Lambda}_1^{(u)}\boldsymbol{\Lambda}_3^{(u)}(\boldsymbol{\Lambda}_2^{(u)})^T=\boldsymbol{I}
\quad \forall u=1,\ldots,U
\end{align}

To estimate the channel, we first estimate the number of paths
associated with each user, the diagonal matrix
${\boldsymbol\Lambda}_3$ and the permutation matrix $\boldsymbol
\Pi$ from (\ref{Sprime}). Suppose there are no estimation errors,
each column of $\boldsymbol{\hat{S}}_L$ is a scaled version of a
training sequence associated with an unknown user. Since the
training sequences of all users are known \emph{a priori}, a
simple correlation-based matching method can be used to determine
the unknown scaling factor and the permutation ambiguity for each
column of $\boldsymbol{\hat{S}}_L$, based on which the number of
paths associated with each user, the diagonal matrix
${\boldsymbol\Lambda}_3$ and the permutation matrix
$\boldsymbol{\Pi}$ can be readily obtained.

Suppose the diagonal matrix ${\boldsymbol\Lambda}_3$ and the
permutation matrix $\boldsymbol{\Pi}$ are perfectly recovered. The
permutation ambiguity for the estimated factor matrices
$\boldsymbol{\hat{A}}_{Q}$ and $\boldsymbol{\hat{A}}_{P}$ can be
removed using the estimated permutation matrix. Thus we have
\begin{align}
\boldsymbol{\hat{A}}_{Q}=&\boldsymbol{A}_{Q}\boldsymbol{\Lambda}_1+\boldsymbol{E}_1
\\
\boldsymbol{\hat{A}}_{P}=&\boldsymbol{A}_{P}\boldsymbol{\Lambda}_2+\boldsymbol{E}_2
\end{align}
Given $\boldsymbol{\hat{A}}_{Q}$, $\boldsymbol{\hat{A}}_{P}$ and
${\boldsymbol\Lambda}_3$, the $u$th user's channel matrix
$\boldsymbol{H}_u$ can be estimated from
$\boldsymbol{\hat{A}}_{Q,u}\boldsymbol{\Lambda}_3^{(u)}\boldsymbol{\hat{A}}_{P,u}$
since we have
\begin{align}
\boldsymbol{\hat{A}}_{Q,u}\boldsymbol{\Lambda}_3^{(u)}\boldsymbol{\hat{A}}_{P,u}=
&\boldsymbol{A}_{Q,u}\boldsymbol{\Lambda}_{1}^{(u)}\boldsymbol{\Lambda}_3^{(u)}
(\boldsymbol{\Lambda}_{2}^{(u)})^T\boldsymbol{A}_{P,u}^T +
\boldsymbol{E}
\nonumber\\
=&\boldsymbol{A}_{Q,u}\boldsymbol{A}_{P,u}^T +
\boldsymbol{E}\nonumber\\
=&\sum_{l=1}^{L_u}\boldsymbol{\tilde{a}}_{\text{BS}}(\theta_{u,l})\boldsymbol{\tilde{a}}_{\text{MS}}(\phi_{u,l})^T
+ \boldsymbol{E}
\nonumber\\
=&\boldsymbol{Q}^T\sum_{l=1}^{L_u}\alpha_{u,l}\boldsymbol{a}_{\text{BS}}(\theta_{u,l})
\boldsymbol{a}_{\text{MS}}(\phi_{u,l})^T \boldsymbol{P} +
\boldsymbol{E}
\nonumber\\
=&\boldsymbol{Q}^T\boldsymbol{H}_u\boldsymbol{P} + \boldsymbol{E}
\label{Hu-estimation}
\end{align}
where $\boldsymbol{E}$ denotes the estimation error caused by
$\boldsymbol{E}_1$ and $\boldsymbol{E}_2$. We see that the joint
multiuser channel estimation has been decoupled into $U$
single-user channel estimation problems via the CP factorization.
In the following section, we will show that the uniqueness of the
decomposition can be guaranteed even when $T\ll U$. This enables a
significant training overhead reduction since traditional
estimation methods rely on the use of orthogonal pilot sequences
(which requires $T=U$) to decouple the multiuser channel
estimation problem into a set of single-user channel estimation
problems. Let
$\boldsymbol{z}_u\triangleq\text{vec}(\boldsymbol{\hat{A}}_{Q,u}\boldsymbol{\Lambda}_3^{(u)}\boldsymbol{\hat{A}}_{P,u})$.
We have
\begin{align}\label{Hu-estimation2}
\boldsymbol{z}_u=(\boldsymbol{P}^T\otimes\boldsymbol{Q}^T)\boldsymbol{\tilde{H}}_u\boldsymbol{\alpha}_u+\boldsymbol{e}
\end{align}
where $\boldsymbol{\alpha}_u\triangleq
[\alpha_{u,1}\phantom{0}\ldots\phantom{0}\alpha_{u,L_u}]$, and
\begin{align}
\boldsymbol{\tilde{H}}_u\triangleq
[\boldsymbol{a}_{\text{MS}}(\phi_{u,1})\otimes
\boldsymbol{a}_{\text{BS}}(\theta_{u,1})\phantom{0}\ldots\phantom{0}
\boldsymbol{a}_{\text{MS}}(\phi_{u,L})\otimes
\boldsymbol{a}_{\text{BS}}(\theta_{u,L_u})]
\end{align}
The estimation of $\boldsymbol{\tilde{H}}_u$ can be cast as a
compressed sensing problem by discretizing the continuous
parameter space into an $N_1\times N_2$ two dimensional grid with
each grid point given by $\{\bar{\theta}_{i},\bar{\phi}_j\}$ for
$i=1,\ldots, N_1$ and $j=1,\ldots,N_2$ and assuming that
$\{\phi_{u,l},\theta_{u,l}\}_{l=1}^{L_u}$ lie on the grid. Thus
(\ref{Hu-estimation2}) can be re-expressed as
\begin{align}\label{Hu-estimation3}
\boldsymbol{z}_u=(\boldsymbol{P}^T\otimes\boldsymbol{Q}^T)\boldsymbol{\bar{\Sigma}}
{\bar{\boldsymbol{\alpha}}}_u+\boldsymbol{e}
\end{align}
where $\boldsymbol{\bar{\Sigma}}$ is an overcomplete dictionary
consisting of $N_1\times N_2$ columns, with its $((i-1)N_1+j)$th
column given by $\boldsymbol{a}_{\text{MS}}(\bar{\phi}_i)\otimes
\boldsymbol{a}_{\text{BS}}(\bar{\theta}_j)$,
${\bar{\boldsymbol{\alpha}}}_u\in\mathbb{C}^{N_1N_2\times 1}$ is a
sparse vector obtained by augmenting $\boldsymbol{\alpha}_u$ with
zero elements.

\section{Uniqueness}\label{sec:uniqueness-analysis}
In this section, we discuss under what conditions the uniqueness
of the CP decomposition and, in turn, the channel estimation  can
be guaranteed.

\subsection{Uniqueness for the Single-Path Geometric Model}
We first consider the special case where there is a direct
line-of-sight path between the BS and each user, in which case we
have $L=U$ and $\boldsymbol{S}_L=\boldsymbol{S}$ (recalling
$\boldsymbol{S}_L=\boldsymbol{S}\boldsymbol{O}$). It is well known
that essential uniqueness of the CP decomposition can be
guaranteed by the Kruskal's condition \cite{Kruskal77}. Let
$k_{\boldsymbol{A}}$ denote the k-rank of a matrix
$\boldsymbol{A}$, which is defined as the largest value of
$k_{\boldsymbol{A}}$ such that every subset of
$k_{\boldsymbol{A}}$ columns of the matrix $\boldsymbol{A}$ is
linearly independent. Kruskal showed that a CP decomposition
$(\boldsymbol{A},\boldsymbol{B},\boldsymbol{C})$ of a third-order
tensor is essentially unique if \cite{Kruskal77}
\begin{align}
k_{\boldsymbol{A}}+k_{\boldsymbol{B}}+k_{\boldsymbol{C}}\geq 2R+2
\label{Kruskals-condition}
\end{align}
where $\boldsymbol{A},\boldsymbol{B},\boldsymbol{C}$ are factor
matrices, and $R$ denotes the CP rank. More formally, we have the
following theorem.

\newtheorem{theorem}{Theorem}
\begin{theorem} \label{theorem1}
Let $(\boldsymbol{A},\boldsymbol{B},\boldsymbol{C})$ be a CP
solution which decomposes a three-mode tensor
$\boldsymbol{\mathcal{X}}$ into $R$ rank-one arrays. Suppose
Kruskal's condition (\ref{Kruskals-condition}) holds and there is
an alternative CP solution
$(\boldsymbol{\bar{A}},\boldsymbol{\bar{B}},\boldsymbol{\bar{C}})$
which also decomposes $\boldsymbol{\mathcal{X}}$ into $R$ rank-one
arrays. Then we have
$\boldsymbol{\bar{A}}=\boldsymbol{A}\boldsymbol{\Pi}\boldsymbol{\Lambda}_a$,
$\boldsymbol{\bar{B}}=\boldsymbol{B}\boldsymbol{\Pi}\boldsymbol{\Lambda}_b$,
and
$\boldsymbol{\bar{C}}=\boldsymbol{C}\boldsymbol{\Pi}\boldsymbol{\Lambda}_c$,
where $\boldsymbol{\Pi}$ is a unique permutation matrix and
$\boldsymbol{\Lambda}_a$, $\boldsymbol{\Lambda}_b$, and
$\boldsymbol{\Lambda}_c$ are unique diagonal matrices such that
$\boldsymbol{\Lambda}_a\boldsymbol{\Lambda}_b\boldsymbol{\Lambda}_c=\boldsymbol{I}$.
\end{theorem}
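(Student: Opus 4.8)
The plan is to reduce the statement to Kruskal's classical \emph{permutation lemma} \cite{Kruskal77} and to supply the rank-counting argument that verifies its hypotheses. Throughout, for a vector $\boldsymbol{v}$ let $\omega(\boldsymbol{v})$ denote its number of nonzero entries. I would begin by writing the hypothesized coincidence of the two decompositions slice-wise. Denoting by $\boldsymbol{c}^{(k)}$ the $k$th row of $\boldsymbol{C}$, the $k$th frontal slice of $\boldsymbol{\mathcal{X}}$ is $\boldsymbol{A}\,\text{diag}(\boldsymbol{c}^{(k)})\,\boldsymbol{B}^T=\boldsymbol{\bar{A}}\,\text{diag}(\boldsymbol{\bar{c}}^{(k)})\,\boldsymbol{\bar{B}}^T$. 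Taking an arbitrary linear combination of the slices with weight vector $\boldsymbol{m}$ gives the central identity
\begin{align}
\boldsymbol{A}\,\text{diag}(\boldsymbol{C}^T\boldsymbol{m})\,\boldsymbol{B}^T=\boldsymbol{\bar{A}}\,\text{diag}(\boldsymbol{\bar{C}}^T\boldsymbol{m})\,\boldsymbol{\bar{B}}^T ,
\nonumber
\end{align}
which couples the column structure of $\boldsymbol{C}$ to that of $\boldsymbol{\bar{C}}$ through the two pairs of factor matrices.

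The key technical lemma I would establish next is a rank bound: if $\omega(\boldsymbol{d})\le\min(k_{\boldsymbol{A}},k_{\boldsymbol{B}})$, then $\text{rank}(\boldsymbol{A}\,\text{diag}(\boldsymbol{d})\,\boldsymbol{B}^T)=\omega(\boldsymbol{d})$, since the $\omega(\boldsymbol{d})$ columns of $\boldsymbol{A}$ and of $\boldsymbol{B}$ indexed by the support of $\boldsymbol{d}$ are each linearly independent by the definition of k-rank, while the right-hand side of the central identity has rank at most $\omega(\boldsymbol{\bar{C}}^T\boldsymbol{m})$. Equating ranks yields, for every $\boldsymbol{m}$ with $\omega(\boldsymbol{C}^T\boldsymbol{m})$ small, the comparison $\omega(\boldsymbol{\bar{C}}^T\boldsymbol{m})\ge\omega(\boldsymbol{C}^T\boldsymbol{m})$, and a symmetric argument using $k_{\boldsymbol{\bar A}},k_{\boldsymbol{\bar B}}$ gives the reverse comparison. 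Kruskal's condition $k_{\boldsymbol{A}}+k_{\boldsymbol{B}}+k_{\boldsymbol{C}}\ge 2R+2$ is exactly what closes the arithmetic of these zero counts: it forces every sparse combination of the columns of $\boldsymbol{C}$ to induce an at-least-as-sparse combination of the columns of $\boldsymbol{\bar{C}}$ with nested support, which is precisely the hypothesis required by the permutation lemma.

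With that hypothesis verified, I would invoke Kruskal's permutation lemma to conclude $\boldsymbol{\bar{C}}=\boldsymbol{C}\boldsymbol{\Pi}\boldsymbol{\Lambda}_c$ for a permutation matrix $\boldsymbol{\Pi}$ and a nonsingular diagonal $\boldsymbol{\Lambda}_c$. Because Kruskal's condition is symmetric in the three factor matrices, the identical reasoning applied to the mode-$1$ and mode-$2$ slice families produces $\boldsymbol{\bar{A}}=\boldsymbol{A}\boldsymbol{\Pi}_a\boldsymbol{\Lambda}_a$ and $\boldsymbol{\bar{B}}=\boldsymbol{B}\boldsymbol{\Pi}_b\boldsymbol{\Lambda}_b$. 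I would then show the three permutations coincide: substituting the scaled-and-permuted factorizations back into the rank-one expansion and matching terms (the rank-one arrays are linearly independent under Kruskal's condition) forces $\boldsymbol{\Pi}_a=\boldsymbol{\Pi}_b=\boldsymbol{\Pi}$, and since each array $\boldsymbol{a}_r\circ\boldsymbol{b}_r\circ\boldsymbol{c}_r$ is itself fixed, the product of the three scalings on each matched term must be unity, giving $\boldsymbol{\Lambda}_a\boldsymbol{\Lambda}_b\boldsymbol{\Lambda}_c=\boldsymbol{I}$. Uniqueness of $\boldsymbol{\Pi}$ and of the diagonals then follows because no column direction is repeated.

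The main obstacle I anticipate is twofold. First, the permutation lemma itself is the genuinely hard part of Kruskal's theorem; its combinatorial core is an induction on the number of columns, and a self-contained treatment would essentially reproduce Kruskal's original argument. Second, the zero-counting bookkeeping must be done carefully so that the threshold $2R+2$ is used sharply, and one must first confirm that $R$ is indeed the rank and that the barred factors inherit the k-ranks of the unbarred ones. Since this result is the classical Kruskal theorem, a fully acceptable alternative is to cite \cite{Kruskal77} for the permutation lemma and present only the rank-counting reduction sketched above.
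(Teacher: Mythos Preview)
Your proposal is substantially more than what the paper does: the paper's entire proof of Theorem~\ref{theorem1} is the single line ``Please refer to \cite{StegemanSidiropoulos07}.'' It does not reproduce any argument, so there is nothing to compare at the level of technique.

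That said, your sketch follows the classical Kruskal route (slice identity, rank counting, permutation lemma), which is precisely the line of argument in \cite{Kruskal77} and its streamlined form in \cite{StegemanSidiropoulos07}, so you are in full agreement with the cited source. Your identified obstacle about whether the barred factors $(\boldsymbol{\bar A},\boldsymbol{\bar B},\boldsymbol{\bar C})$ inherit adequate k-ranks is real for the naive symmetric argument; the Stegeman--Sidiropoulos proof sidesteps it by working only with the k-ranks of $(\boldsymbol{A},\boldsymbol{B},\boldsymbol{C})$ and using a one-sided version of the permutation lemma, so if you flesh out the proof you should follow that asymmetric formulation rather than assuming the reverse comparison. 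Your final remark---that citing the original source is a fully acceptable alternative---is exactly what the paper opts for.
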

\begin{proof}
Please refer to \cite{StegemanSidiropoulos07}.
\end{proof}

From Theorem \ref{theorem1}, we know that if the following
condition holds
\begin{align}
k_{\boldsymbol{A}_Q}+k_{\boldsymbol{A}_P}+k_{\boldsymbol{S}}\geq
2U+2 \label{K-condition}
\end{align}
then the CP decomposition of $\boldsymbol{\mathcal{Y}}$ is unique
and in the noiseless case, we can ensure that the factor matrices
can be estimated up to a permutation and scaling ambiguity, i.e.
$\boldsymbol{\hat{A}}_{Q}=\boldsymbol{A}_{Q}\boldsymbol{\Pi}\boldsymbol{\Lambda}_1$,
$\boldsymbol{\hat{A}}_{P}=\boldsymbol{A}_{P}\boldsymbol{\Pi}\boldsymbol{\Lambda}_2$,
and
$\boldsymbol{\hat{S}}=\boldsymbol{S}\boldsymbol{\Pi}\boldsymbol{\Lambda}_3$,
with
$\boldsymbol{\Lambda}_1\boldsymbol{\Lambda}_2\boldsymbol{\Lambda}_3=\boldsymbol{I}$.

We now discuss how to design the beamforming matrix
$\boldsymbol{P}\in\mathbb{C}^{N_{\text{MS}}\times T'}$, the
combining matrix $\boldsymbol{Q}\in\mathbb{C}^{N_{\text{BS}}\times
M_{\text{BS}}}$, and the pilot symbol matrix
$\boldsymbol{S}\in\mathbb{C}^{T\times U}$ such that the Kruskal's
condition (\ref{K-condition}) can be met. Note that
$\boldsymbol{A}_{Q}=\boldsymbol{Q}^T\boldsymbol{A}_{\text{BS}}$,
where $\boldsymbol{A}_{\text{BS}}$ is a Vandermonte matrix whose
k-rank is equivalent to the number of columns, $U$, when the
angles of arrival $\{\theta_u\}$ are distinct. The k-rank of
$\boldsymbol{A}_{Q}$, therefore, is no greater than $U$, i.e.
$k_{\boldsymbol{A}_{Q}}\leq U$. The problem now becomes whether we
can design a combining matrix $\boldsymbol{Q}$ such that
$k_{\boldsymbol{A}_{Q}}$ achieves its upper bound $U$. We will
show that the answer is affirmative for a randomly generated
$\boldsymbol{Q}$ with i.i.d. entries. Specifically, we assume each
entry of $\boldsymbol{Q}$ is chosen uniformly from a unit circle
scaled by a constant $1/N_{\text{BS}}$, i.e.
$q_{m,n}=(1/N_{\text{BS}})e^{j\vartheta_{m,n}}$, where
$\vartheta_{m,n}\in [-\pi,\pi]$ follows a uniform distribution.
Let $a_{m,i}\triangleq
\boldsymbol{q}_m^T\boldsymbol{a}_{\text{BS}}(\theta_i)$ denote the
$(m,i)$th entry of $\boldsymbol{A}_{Q}$. It can be readily
verified that ${\mathbb E}[a_{m,i}]=0,\forall m,i$ and
\begin{equation}
{\mathbb E}[a_{m,i}a_{n,j}^{\ast}]=\begin{cases}0 & m\neq n \\
\frac{1}{N_{\text{BS}}^2}\boldsymbol{a}_{\text{BS}}^H(\theta_i)\boldsymbol{a}_{\text{BS}}(\theta_j)
& m=n
\end{cases}
\end{equation}
When the number of antennas at the BS is sufficiently large, the
steering vectors $\{\boldsymbol{a}_{\text{BS}}(\theta_i)\}$ become
mutually quasi-orthogonal, i.e.
$\boldsymbol{a}_{\text{BS}}^H(\theta_i)\boldsymbol{a}_{\text{BS}}(\theta_j)\rightarrow
\delta(\theta_i-\theta_j)$, which implies that the entries of
$\boldsymbol{A}_{Q}$ are uncorrelated with each other. On the
other hand, according to the central limit theorem, we know that
each entry $a_{m,i}$ approximately follows a Gaussian
distribution. Therefore entries of $\boldsymbol{A}_{Q}$ can be
considered as i.i.d. Gaussian variables, and $\boldsymbol{A}_{Q}$
is full column rank with probability one. Thus we can reach that
the k-rank of $\boldsymbol{A}_{Q}$ is equivalent to $U$ with
probability one.


Following a similar derivation, we can arrive at the following
conclusion: if each entry of the beamforming matrix
$\boldsymbol{P}$ is chosen uniformly from a unit circle scaled by
a constant $1/N_{\text{MS}}$, then the k-rank of
$\boldsymbol{A}_{P}$ is equivalent to $U$ with probability one.
Thus we can guarantee that the Kruskal's condition
(\ref{K-condition}) is met with probability one as long as
$k_{\boldsymbol{S}}\geq 2$, i.e. any two columns of
$\boldsymbol{S}$ are linearly independent. For the single path
geometric model, $\boldsymbol{S}$ consists of $U$ columns, with
the $u$th column constructed by pilot symbols of the $u$th user.
Therefore the condition $k_{\boldsymbol{S}}\geq 2$ can be ensured
provided that $T\geq 2$, and pilot symbol vectors of users are
mutually independent. Specifically, we can design the pilot
symbols by minimizing the mutual coherence of $\boldsymbol{S}$,
i.e.
\begin{align}\label{codebook}
\min_{\boldsymbol{S}}\quad \mu(\boldsymbol{S})
\end{align}
where
\begin{align}
\mu(\boldsymbol{S})\triangleq\max_{i\neq j}
\left|\frac{\langle\boldsymbol{s}_i,\boldsymbol{s}_j\rangle}{\|\boldsymbol{s}_i\|\|\boldsymbol{s}_j\|}\right|
\nonumber
\end{align}
The solution of above problem can be found in
\cite{ConwayHardin96,StrohmerThomas03}. For the case
$k_{\boldsymbol{A}_{Q}}=U$ and $k_{\boldsymbol{A}_{P}}=U$, the
Kruskal's condition can be met by choosing the length of the pilot
sequence equal to two, i.e. $T=2$, irrespective of the value of
$U$. This allows a considerable training overhead reduction,
particularly when $U$ is large. Note that besides random coding,
the beamforming and combining matrices $\boldsymbol{P}$ and
$\boldsymbol{Q}$ can also be devised to form a certain number of
transmit/receive beams. The k-rank of the resulting matrices
$\boldsymbol{A}_{P}$ and $\boldsymbol{A}_{Q}$ may also achieve the
upper bound $U$.

\subsection{Uniqueness for the General Geometric Model}
For the general geometric model where there are more than one path
between each user and the BS, the Kruskal's condition becomes
\begin{align}
k_{\boldsymbol{A}_Q}+k_{\boldsymbol{A}_P}+k_{\boldsymbol{S}_L}\geq
2L+2
\end{align}
Since the k-rank of
$\boldsymbol{A}_Q\in\mathbb{C}^{M_{\text{BS}}\times L}$ and
$\boldsymbol{A}_P\in\mathbb{C}^{T'\times L}$ is at most equal to
$L$, we need $k_{\boldsymbol{S}_L}\geq 2$ to satisfy the above
Kruskal's condition. However, for the general geometric model, the
k-rank of $\boldsymbol{S}_L$ is always equal to one because
multiple column vectors associated with a common user are linearly
dependent. Thus the Kruskal's condition can never be satisfied in
this case. Nevertheless, this does not mean that the uniqueness of
the CP decomposition does not hold for the general geometric
model. In fact, considering the special form of the decomposition
(\ref{CP}), the uniqueness can be guaranteed under a less
restrictive condition.

We first write (\ref{CP}) as follows
\begin{align}
\boldsymbol{\mathcal{Y}}=&\sum_{u=1}^U\sum_{l=1}^{L_u}
\boldsymbol{\tilde{a}}_{\text{BS}}(\theta_{u,l})\circ\boldsymbol{\tilde{a}}_{\text{MS}}(\phi_{u,l})
\circ\boldsymbol{{s}}_{u}
+\boldsymbol{\mathcal{W}} \nonumber\\
=&\sum_{u=1}^U
(\boldsymbol{A}_{Q_u}\boldsymbol{A}_{P_u}^T)\circ\boldsymbol{{s}}_{u}+\boldsymbol{\mathcal{W}}
\end{align}
where $\boldsymbol{A}_{Q_u}$ and $\boldsymbol{A}_{P_u}$ are
defined in (\ref{AQu}) and (\ref{APu}), respectively, and
$\boldsymbol{{s}}_{u}$ is defined in (\ref{su-definition}). We see
that the tensor $\boldsymbol{\mathcal{Y}}$ can be expressed as a
sum of matrix-vector outer products, more specifically, a sum of
rank-$(L_u,L_u,1)$ terms since $\boldsymbol{A}_{Q_u}$ and
$\boldsymbol{A}_{P_u}$ are both rank-$L_u$. For this block term
decomposition, we have the following generalized version of the
Kruskal's condition.

Before proceeding, we define $\boldsymbol{A}\triangleq
[\boldsymbol{A}_1\phantom{0}\ldots\phantom{0}\boldsymbol{A}_R]$,
$\boldsymbol{B}\triangleq
[\boldsymbol{B}_1\phantom{0}\ldots\phantom{0}\boldsymbol{B}_R]$,
and $\boldsymbol{C}\triangleq
[\boldsymbol{c}_1\phantom{0}\ldots\phantom{0}\boldsymbol{c}_R]$,
and generalize the k-rank concept to the above partitioned
matrices. Specifically, the $k'$-rank of a partitioned matrix
$\boldsymbol{A}$, denoted by $k'_{\boldsymbol{A}}$, is the maximal
number $r$ such that any set of $r$ submatrices of
$\boldsymbol{A}$ yields a set of linearly independent columns.

We have the following theorem.
\begin{theorem} \label{theorem2}
Let $(\boldsymbol{A},\boldsymbol{B},\boldsymbol{C})$ represent a
decomposition of $\boldsymbol{\mathcal{X}}\in\mathbb{C}^{M\times
N\times K}$ in rank-$(L_r,L_r,1)$ terms, i.e.
\begin{align}
\boldsymbol{\mathcal{X}}=\sum_{r=1}^R(\boldsymbol{A}_r\boldsymbol{B}_r^T)\circ\boldsymbol{c}_r
\nonumber
\end{align}
We assume $M\geq\max_r L_r$, $N\geq\max_r L_r$,
$\text{rank}(\boldsymbol{A}_r)=L_r$, and
$\text{rank}(\boldsymbol{B}_r)=L_r$. Suppose the following
conditions
\begin{align}
MN\geq \sum_{r=1}^R L_r^2
\end{align}
\begin{align}
k'_{\boldsymbol{A}}+k'_{\boldsymbol{B}}+k_{\boldsymbol{C}}\geq
2R+2
\end{align}
hold and we have an alternative decomposition of
$\boldsymbol{\mathcal{X}}$, represented by
$(\boldsymbol{\bar{A}},\boldsymbol{\bar{B}},\boldsymbol{\bar{C}})$,
with $k'_{\boldsymbol{\bar{A}}}$ and $k'_{\boldsymbol{\bar{B}}}$
maximal under the given dimensionality constraints. Then
$(\boldsymbol{A},\boldsymbol{B},\boldsymbol{C})$ and
$(\boldsymbol{\bar{A}},\boldsymbol{\bar{B}},\boldsymbol{\bar{C}})$
are essentially equal, i.e.
$\boldsymbol{\bar{A}}=\boldsymbol{A}\boldsymbol{\Pi}\boldsymbol{\Lambda}_a$,
$\boldsymbol{\bar{B}}=\boldsymbol{B}\boldsymbol{\Pi}\boldsymbol{\Lambda}_b$
and
$\boldsymbol{\bar{C}}=\boldsymbol{C}\boldsymbol{\Pi}_c\boldsymbol{\Lambda}_c$,
in which $\boldsymbol{\Pi}$ is a block permutation matrix whose
block structure is consistent with that of $\boldsymbol{A}$ and
$\boldsymbol{B}$, $\boldsymbol{\Pi}_c$ is permutation matrix whose
permutation pattern is the same as that of $\boldsymbol{\Pi}$,
$\boldsymbol{\Lambda}_a$ and $\boldsymbol{\Lambda}_b$ are
nonsingular block-diagonal matrices, compatible with the block
structure of $\boldsymbol{A}$ and $\boldsymbol{B}$, and
$\boldsymbol{\Lambda}_c$ is a nonsingular diagonal matrix. Also,
let $\boldsymbol{\Lambda}_{a,r}$ and $\boldsymbol{\Lambda}_{b,r}$
denote the $r$th diagonal block of $\boldsymbol{\Lambda}_{a}$ and
$\boldsymbol{\Lambda}_{b}$, respectively, and $\lambda_{r}$ denote
the $r$th diagonal element of $\boldsymbol{\Lambda}_c$. We have
$\lambda_{r}\boldsymbol{\Lambda}_{a,r}
\boldsymbol{\Lambda}_{b,r}^T=\boldsymbol{I}, \forall r$.
\end{theorem}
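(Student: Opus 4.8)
The plan is to adapt Kruskal's permutation-lemma argument from rank-one terms to the rank-$(L_r,L_r,1)$ block setting, rather than to invoke Theorem \ref{theorem1} as a black box. A direct appeal to Theorem \ref{theorem1} is not available: collapsing each block into $L_r$ rank-one terms would wrongly restrict the within-block ambiguity to be diagonal, whereas the shared third-mode vector $\boldsymbol{c}_r$ together with the freedom $\boldsymbol{A}_r\boldsymbol{B}_r^T=(\boldsymbol{A}_r\boldsymbol{M})(\boldsymbol{B}_r\boldsymbol{M}^{-T})^T$ for any nonsingular $\boldsymbol{M}$ forces a genuinely block version of the argument. First I would set up the mode-$3$ unfolding $\boldsymbol{X}_{(3)}=\boldsymbol{C}\boldsymbol{G}^T$, where the $r$th column of $\boldsymbol{G}\in\mathbb{C}^{MN\times R}$ is $\text{vec}(\boldsymbol{A}_r\boldsymbol{B}_r^T)$, and likewise $\boldsymbol{X}_{(3)}=\bar{\boldsymbol{C}}\bar{\boldsymbol{G}}^T$ for the alternative decomposition. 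Because $\text{rank}(\boldsymbol{A}_r)=\text{rank}(\boldsymbol{B}_r)=L_r$, the $r$th block lives in the $L_r^2$-dimensional subspace $\text{col}(\boldsymbol{A}_r)\otimes\text{col}(\boldsymbol{B}_r)$ of $\mathbb{C}^{MN}$; the hypothesis $MN\ge\sum_r L_r^2$ supplies exactly the dimensional headroom needed for these $R$ block subspaces to be independent, which is the prerequisite for separating blocks once the third-mode factor is matched.

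The central step is to show that $\boldsymbol{C}$ is essentially unique, $\bar{\boldsymbol{C}}=\boldsymbol{C}\boldsymbol{\Pi}_c\boldsymbol{\Lambda}_c$. Following Kruskal, I would compare, for generic vectors $\boldsymbol{d}$, the supports (numbers of nonzero entries) of $\boldsymbol{d}^T\boldsymbol{C}$ and $\boldsymbol{d}^T\bar{\boldsymbol{C}}$ and feed this into a permutation lemma. The new ingredient is that the weight attached to each column of $\boldsymbol{C}$ multiplies a full-rank block $\boldsymbol{A}_r\boldsymbol{B}_r^T$ rather than a rank-one dyad, so the scalar counting must be routed through the $k'$-ranks. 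The crucial rank fact is that a slice combination $\sum_r(\boldsymbol{d}^T\boldsymbol{c}_r)\boldsymbol{A}_r\boldsymbol{B}_r^T$ whose number of active blocks is at most $\min(k'_{\boldsymbol{A}},k'_{\boldsymbol{B}})$ has rank exactly the sum of the $L_r$ over those active blocks — the block analogue of ``a combination of $m$ rank-one terms has rank $m$'' — since any that many blocks are jointly column-independent on both the $\boldsymbol{A}$ and $\boldsymbol{B}$ sides. Feeding this rank profile together with $k_{\boldsymbol{C}}$ into the permutation lemma, and invoking the generalized Kruskal inequality $k'_{\boldsymbol{A}}+k'_{\boldsymbol{B}}+k_{\boldsymbol{C}}\ge 2R+2$, yields essential uniqueness of $\boldsymbol{C}$; the maximality of $k'_{\bar{\boldsymbol{A}}},k'_{\bar{\boldsymbol{B}}}$ is what makes the competing decomposition non-degenerate enough to run the same argument with the modes interchanged, so that the block column spaces also match, $\text{col}(\bar{\boldsymbol{A}}_r)=\text{col}(\boldsymbol{A}_{\pi(r)})$ and $\text{col}(\bar{\boldsymbol{B}}_r)=\text{col}(\boldsymbol{B}_{\pi(r)})$, under a common block permutation enforced by the shared third factor.

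With the correspondence fixed so that $\bar{\boldsymbol{c}}_r=\lambda_r\boldsymbol{c}_{\pi(r)}$ and the block ranges identified, I would finish within each block. Separating the $\pi(r)$th block using the direct-sum structure guaranteed by $MN\ge\sum_r L_r^2$ isolates the matrix identity $\bar{\boldsymbol{A}}_r\bar{\boldsymbol{B}}_r^T=\gamma_r\boldsymbol{A}_{\pi(r)}\boldsymbol{B}_{\pi(r)}^T$ for a nonzero scalar $\gamma_r$ fixed by $\lambda_r$. Since both sides are rank-$L_r$ factorizations of one matrix and $\boldsymbol{A}_{\pi(r)},\boldsymbol{B}_{\pi(r)}$ have full column rank, uniqueness of the full-rank factorization gives $\bar{\boldsymbol{A}}_r=\boldsymbol{A}_{\pi(r)}\boldsymbol{\Lambda}_{a,r}$ and $\bar{\boldsymbol{B}}_r=\boldsymbol{B}_{\pi(r)}\boldsymbol{\Lambda}_{b,r}$ with $\boldsymbol{\Lambda}_{a,r},\boldsymbol{\Lambda}_{b,r}$ nonsingular, and substituting back into the term forces $\lambda_r\boldsymbol{\Lambda}_{a,r}\boldsymbol{\Lambda}_{b,r}^T=\boldsymbol{I}$. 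Collecting the blocks produces the block-diagonal $\boldsymbol{\Lambda}_a,\boldsymbol{\Lambda}_b$, the block permutation $\boldsymbol{\Pi}$, and the matching permutation $\boldsymbol{\Pi}_c$ claimed in the statement.

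I expect the main obstacle to be the generalized permutation lemma of the second step. The delicate part is the combinatorial bookkeeping that replaces Kruskal's scalar-weight counting: I must translate ``a low-rank slice combination forces many blocks to be inactive'' into a uniform support-size inequality, keeping careful track of how $k'_{\boldsymbol{A}}$ and $k'_{\boldsymbol{B}}$ enter when fewer than $\min(k'_{\boldsymbol{A}},k'_{\boldsymbol{B}})$ blocks are active versus more. Getting this block version of Kruskal's lemma correct — as opposed to the block-separation and full-rank-factorization arguments, which are routine once the lemma is in hand — is where the real difficulty lies.
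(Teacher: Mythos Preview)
The paper does not actually prove Theorem~\ref{theorem2}: its entire proof reads ``Please refer to \cite{Lathauwer08}.'' What you have written is, in effect, a reconstruction of the argument in that reference (De~Lathauwer's block-term decomposition uniqueness results), and your outline---mode-$3$ unfolding, a block analogue of Kruskal's permutation lemma driven by the generalized $k'$-rank inequality to pin down $\boldsymbol{C}$, then matching block column spaces, then the within-block full-rank factorization identity forcing $\lambda_r\boldsymbol{\Lambda}_{a,r}\boldsymbol{\Lambda}_{b,r}^T=\boldsymbol{I}$---tracks the structure of that original proof closely. So relative to the present paper you are doing strictly more, and relative to the cited source you are on the same path.

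Two small cautions. First, the dimensional condition $MN\ge\sum_r L_r^2$ is necessary but not by itself sufficient for the block subspaces $\text{col}(\boldsymbol{A}_r)\otimes\text{col}(\boldsymbol{B}_r)$ to be in direct sum; in the actual argument this independence is obtained from the $k'$-rank hypotheses rather than from dimension counting alone, so be careful not to lean on $MN\ge\sum_r L_r^2$ for more than it gives. Second, your own assessment of the hard part is accurate: the block permutation lemma is where the work is, and the maximality hypothesis on $k'_{\bar{\boldsymbol{A}}}, k'_{\bar{\boldsymbol{B}}}$ is precisely what lets you run the rank-counting symmetrically on the alternative decomposition---make sure that hypothesis is invoked at the right point rather than treated as cosmetic.
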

\begin{proof}
Please refer to \cite{Lathauwer08}.
\end{proof}

From Theorem \ref{theorem2}, we know that if the following
conditions hold
\begin{align}
M_{\text{BS}}T'\geq \sum_{u=1}^U L_u^2\\
k'_{\boldsymbol{A}_Q}+k'_{\boldsymbol{A}_P}+k_{\boldsymbol{{S}}}\geq
2U+2 \label{g-Kruskal-condition}
\end{align}
then the essential uniqueness of the CP decomposition of
$\boldsymbol{\mathcal{Y}}$ in (\ref{CP}) can be guaranteed.
Following an analysis similar to our previous subsection, we can
arrive at the $k'$-ranks of $\boldsymbol{A}_Q$ and
$\boldsymbol{A}_P$ are equivalent to $U$ with probability one.
Therefore we only need $k_{\boldsymbol{{S}}}\geq 2$ in order
to satisfy the above generalized Kruskal's condition
(\ref{g-Kruskal-condition}). This condition can be easily
satisfied by assigning pairwise independent pilot symbol vectors
to users (provided $T\geq 2$).


Since the proposed algorithm yields a canonical form of CP
decomposition represented as a sum of rank-one tensor components,
we need further explore the relationship between the true factor
matrices and the estimated factor matrices. We write
\begin{align}
\boldsymbol{\mathcal{X}}=&\sum_{r=1}^R(\boldsymbol{A}_r\boldsymbol{B}_r^T)\circ\boldsymbol{c}_r
\nonumber\\
=&\sum_{r=1}^R\sum_{j=1}^{L_R}\boldsymbol{A}_{r}[:,j]\circ\boldsymbol{B}_{r}[:,j]\circ\boldsymbol{c}_r
\nonumber\\
=&\sum_{l=1}^L \boldsymbol{a}_l
\circ\boldsymbol{b}_l\circ\boldsymbol{f}_l
\end{align}
where $L\triangleq\sum_{r=1}^R L_r$, $\boldsymbol{X}[:,j]$ denotes
the $j$th column of $\boldsymbol{X}$, $\boldsymbol{a}_l$ and
$\boldsymbol{b}_l$ denote the $l$th column of $\boldsymbol{A}$ and
$\boldsymbol{B}$, respectively, and
\begin{align}
\boldsymbol{f}_l=\boldsymbol{c}_r \quad \forall l\in
\left[\sum_{i=1}^{r-1}L_i+1, \sum_{i=1}^r L_i\right]
\end{align}
Define $\boldsymbol{F}\triangleq
[\boldsymbol{f}_1\phantom{0}\ldots\phantom{0}\boldsymbol{f}_L]$.
Clearly, $\boldsymbol{A}$, $\boldsymbol{B}$, and $\boldsymbol{F}$
are true factor matrices of $\boldsymbol{\mathcal{X}}$. The CP
decomposition of $\boldsymbol{\mathcal{X}}$ can also be expressed
as
\begin{align}
\boldsymbol{\mathcal{X}}=&\sum_{r=1}^R(\boldsymbol{\bar{A}}_r\boldsymbol{\bar{B}}_r^T)\circ\boldsymbol{\bar{c}}_r
\nonumber\\
=&\sum_{r=1}^R\sum_{j=1}^{L_r}(\boldsymbol{\bar{A}}_{r}[:,j]
\boldsymbol{\bar{B}}_{r}[:,j]^T)\circ(\lambda_{r}\boldsymbol{c}_r)
\nonumber\\
=&\sum_{r=1}^R\sum_{j=1}^{L_r}(\beta_{r,j}\boldsymbol{\bar{A}}_{r}[:,j]
\boldsymbol{\bar{B}}_{r}[:,j]^T)\circ(\beta_{r,j}^{-1}\lambda_{r}\boldsymbol{c}_r)
\nonumber\\
=&\sum_{l=1}^L
\boldsymbol{\tilde{a}}_{l}\circ\boldsymbol{\tilde{b}}_{l}\circ\boldsymbol{\tilde{f}}_{l}
\end{align}
where
\begin{align}
\boldsymbol{\tilde{a}}_{l}\triangleq
&\beta_{r,j}\boldsymbol{\bar{A}}_{r}[:,j] \quad
l=\sum_{i=1}^{r-1}L_r+j \nonumber \\
\boldsymbol{\tilde{b}}_{l}\triangleq
&\boldsymbol{\bar{B}}_{r}[:,j] \quad l=\sum_{i=1}^{r-1}L_r+j \nonumber\\
\boldsymbol{\tilde{f}}_{l} \triangleq &
\beta_{r,j}^{-1}\lambda_{r}\boldsymbol{c}_r \quad
l=\sum_{i=1}^{r-1}L_r+j \nonumber
\end{align}
Define
\begin{align}
\boldsymbol{\tilde{A}}\triangleq&
[\boldsymbol{\tilde{a}}_{1}\phantom{0}\boldsymbol{\tilde{a}}_{2}\phantom{0}\ldots\phantom{0}\boldsymbol{\tilde{a}}_{L}]
\nonumber \\
\boldsymbol{\tilde{B}}\triangleq&
[\boldsymbol{\tilde{b}}_{1}\phantom{0}\boldsymbol{\tilde{b}}_{2}\phantom{0}\ldots\phantom{0}\boldsymbol{\tilde{b}}_{L}]
\nonumber \\
\boldsymbol{\tilde{F}}\triangleq&
[\boldsymbol{\tilde{f}}_{1}\phantom{0}\boldsymbol{\tilde{f}}_{2}\phantom{0}\ldots\phantom{0}\boldsymbol{\tilde{f}}_{L}]
\nonumber
\end{align}
Clearly,
$(\boldsymbol{\tilde{A}},\boldsymbol{\tilde{B}},\boldsymbol{\tilde{F}})$
is an alternative solution which decomposes
$\boldsymbol{\mathcal{X}}$ into $L$ rank-one tensor components. It
is easy to verify that the true factor matrices
$(\boldsymbol{A},\boldsymbol{B},\boldsymbol{F})$ and the estimated
factor matrices
$(\boldsymbol{\tilde{A}},\boldsymbol{\tilde{B}},\boldsymbol{\tilde{F}})$
are related as follows:
\begin{align}
\boldsymbol{\tilde{A}}=&\boldsymbol{A}\boldsymbol{\Lambda}_1\boldsymbol{\Pi}
\label{factor-matrix-relation1}
\\
\boldsymbol{\tilde{B}}=&\boldsymbol{B}\boldsymbol{\Lambda}_2\boldsymbol{\Pi}
\label{factor-matrix-relation2}
\\
\boldsymbol{\tilde{F}}=&\boldsymbol{F}\boldsymbol{\Lambda}_3\boldsymbol{\Pi}
\label{factor-matrix-relation3}
\end{align}
where $\boldsymbol{\Pi}$ is a permutation matrix, and
\begin{align}
\boldsymbol{\Lambda}_1=&\boldsymbol{\Lambda}_a\boldsymbol{D}_{\beta}
\\
\boldsymbol{\Lambda}_2=&\boldsymbol{\Lambda}_b \\
\boldsymbol{\Lambda}_3=&\boldsymbol{D}_{\beta}^{-1}\boldsymbol{D}_{\lambda}
\end{align}
in which $\boldsymbol{D}_{\beta}$ is a diagonal matrix with its
$l$th ($l=\sum_{i=1}^{r-1}L_r+j$) diagonal element equal to
$\beta_{r,j}$, and
\begin{align}
\boldsymbol{D}_{\lambda}\triangleq\text{diag}(\lambda_1\boldsymbol{I}_{L_1},\ldots,
\lambda_R\boldsymbol{I}_{L_R})
\end{align}
where $\boldsymbol{I}_{n}$ is an $n\times n$ identity matrix. It
is easy to verify that
\begin{align}
\boldsymbol{\Lambda}_1\boldsymbol{\Lambda}_3\boldsymbol{\Lambda}_2^T
=\boldsymbol{\Lambda}_a\boldsymbol{D}_{\lambda}\boldsymbol{\Lambda}_b^T=\boldsymbol{I}
\label{factor-matrix-relation4}
\end{align}
since we have $\lambda_{r}\boldsymbol{\Lambda}_{a,r}
\boldsymbol{\Lambda}_{b,r}^T=\boldsymbol{I}, \forall r$.




\section{A Direct Compressed Sensing-Based Channel Estimation Method}\label{sec:cs-method}
The multiuser channel estimation problem considered in this paper
can also be formulated as a sparse signal recovery problem by
exploiting the poor scattering nature of the mmWave channel,
without resorting to the CP decomposition. Such a direct
compressed sensing-based method is discussed in the following. Let
$\boldsymbol{Y}_{(3)}$ denote the mode-3 unfolding of the tensor
$\boldsymbol{\mathcal{Y}}$ defined in (\ref{CP}). We have
\begin{align}
\boldsymbol{Y}_{(3)}=&{\boldsymbol{S}}_L(\boldsymbol{A}_P\odot\boldsymbol{A}_Q)^T+\boldsymbol{W}_{(3)}
\nonumber\\
=&{\boldsymbol{S}}_L[\boldsymbol{\tilde{a}}_{\text{MS}}(\phi_1)\otimes
\boldsymbol{\tilde{a}}_{\text{BS}}(\theta_1)\phantom{0}\ldots\phantom{0}
\boldsymbol{\tilde{a}}_{\text{MS}}(\phi_L)\otimes
\boldsymbol{\tilde{a}}_{\text{BS}}(\theta_L)]^T +\boldsymbol{W}_{(3)}\nonumber\\
\stackrel{(a)}{=}&{\boldsymbol{S}}_L\boldsymbol{D}
\boldsymbol{\Sigma}^T(\boldsymbol{P}^T\otimes\boldsymbol{Q}^T)^T+\boldsymbol{W}_{(3)}
\end{align}
where $(a)$ comes from the mixed-product property:
$(\boldsymbol{A}\otimes\boldsymbol{B})(\boldsymbol{C}\otimes\boldsymbol{D})=(\boldsymbol{A}\boldsymbol{C})\otimes
(\boldsymbol{B}\boldsymbol{D})$, and
\begin{align}
\boldsymbol{\Sigma}\triangleq
[\boldsymbol{a}_{\text{MS}}(\phi_1)\otimes
\boldsymbol{a}_{\text{BS}}(\theta_1)\phantom{0}\ldots\phantom{0}
\boldsymbol{a}_{\text{MS}}(\phi_L)\otimes
\boldsymbol{a}_{\text{BS}}(\theta_L)] \nonumber
\end{align}
\begin{align}
\boldsymbol{D}\triangleq\text{diag}(\alpha_1,\ldots,\alpha_L)
\end{align}
Taking the transpose of $\boldsymbol{Y}_{(3)}$, we arrive at
\begin{align}
\boldsymbol{Y}_{(3)}^T=&(\boldsymbol{P}^T\otimes\boldsymbol{Q}^T)\boldsymbol{\Sigma}
\boldsymbol{D}\boldsymbol{O}^T{\boldsymbol{S}}^T+\boldsymbol{W}_{(3)}
\label{eqn-1}
\end{align}
The dictionary $\boldsymbol{\Sigma}$ is characterized by a number
of unknown parameters $\{\theta_l,\phi_l\}$ which need to be
estimated. To formulate the channel estimation as a sparse signal
recovery problem, we discretize the continuous parameter space
into an $N_1\times N_2$ two dimensional grid with each grid point
given by $\{\bar{\theta}_{i},\bar{\phi}_j\}$ for $i=1,\ldots, N_1$
and $j=1,\ldots,N_2$. Assume that the true parameters
$\{\theta_l,\phi_l\}$ lie on the two-dimensional grid. Hence
(\ref{eqn-1}) can be re-expressed as
\begin{align}
\boldsymbol{Y}_{(3)}^T=(\boldsymbol{P}^T\otimes\boldsymbol{Q}^T)\boldsymbol{\bar{\Sigma}}
\boldsymbol{\bar{D}}\boldsymbol{S}^T+\boldsymbol{W}_{(3)}
\end{align}
where $\boldsymbol{\bar{\Sigma}}$ is an overcomplete dictionary
consisting of $N_1\times N_2$ columns, with its $((i-1)N_1+j)$th
column given by $\boldsymbol{a}_{\text{MS}}(\bar{\phi}_i)\otimes
\boldsymbol{a}_{\text{BS}}(\bar{\theta}_j)$,
$\boldsymbol{\bar{D}}\in\mathbb{C}^{N_1N_2\times U}$ is a sparse
matrix obtained by augmenting ${\boldsymbol{D}}\boldsymbol{O}^T$
with zero rows. Let
$\boldsymbol{y}\triangleq\text{vec}(\boldsymbol{Y}_{(3)}^T)$ and
define
$\boldsymbol{\Phi}\triangleq(\boldsymbol{P}^T\otimes\boldsymbol{Q}^T)\boldsymbol{\bar{\Sigma}}$.
We have
\begin{align}
\boldsymbol{y}=({\boldsymbol S}\otimes\boldsymbol{\Phi})\boldsymbol{d}+\boldsymbol{w}
\label{eqn-2}
\end{align}
where $\boldsymbol{d}\triangleq\text{vec}(\boldsymbol{\bar{D}})$
is an unknown sparse vector, and
$\boldsymbol{w}\triangleq\text{vec}(\boldsymbol{W}_{(3)})$ denotes
the additive noise. We see that the channel estimation problem has
now been formulated as a conventional sparse signal recovery
problem. The problem can be further recast as an
$\ell_1$-regularized optimization problem
\begin{align}
\min_{\boldsymbol{d}}\quad
\|\boldsymbol{y}-({\boldsymbol S}\otimes\boldsymbol{\Phi})\boldsymbol{d}\|_2^2+\lambda\|\boldsymbol{d}\|_1
\label{L1-minimization}
\end{align}
and many efficient algorithms such as fast iterative
shrinkage-thresholding algorithm (FISTA) \cite{BeckTeboulle09} can
be employed to solve the above $\ell_1$-regularized optimization
problem. In practice, the true parameters may not be aligned on
the presumed grid. This error, also referred to as the grid
mismatch, leads to deteriorated performance. Finer grids can
certainly be used to reduce grid mismatch and improve the
reconstruction accuracy. Nevertheless, recovery algorithms may
become numerically instable and computationally prohibitive when
very fine discretized grids are employed.

\section{Computational Complexity Analysis}\label{sec:complexity-analysis}
We discuss the computational complexity of the proposed CP
decomposition-based method and its comparison with the direct
compressed sensing-based method. The computational task of our
proposed method involves solving the least squares problems
(\ref{updateA})--(\ref{updateC}) at each iteration and solving the
compressed sensing problem (\ref{Hu-estimation3}) after the factor
matrices are estimated. Let $\boldsymbol{A}=\boldsymbol{A}_Q$,
$\boldsymbol{B}=\boldsymbol{A}_P$,
$\boldsymbol{C}=\boldsymbol{S}_L$ in
(\ref{updateA})--(\ref{updateC}). Considering the update of
$\boldsymbol{A}_Q$, we have
$\boldsymbol{A}_Q^T=(\boldsymbol{V}^H\boldsymbol{V}+\mu\boldsymbol{I})^{-1}\boldsymbol{V}^H\boldsymbol{Y}_{(1)}^T$,
where
$\boldsymbol{V}\triangleq(\boldsymbol{S}^{(t)}\odot\boldsymbol{A}_{P}^{(t)})\in\mathbb{C}^{TT'\times
K}$ is a tall matrix as we usually have $TT'>K$. Noting that
$\boldsymbol{Y}_{(1)}^T\in\mathbb{C}^{TT'\times M_{\text{BS}}}$,
it can be easily verified that the number of flops required to
calculate $\boldsymbol{A}_Q^T$ is of order
$\mathcal{O}({K}T'{T}M_{\text{BS}}+K^2 T'T +K^3)$. $K$ is usually
of the same order of magnitude as the value of $L$. When $L$ is
small, the order of the dominant term will be
$\mathcal{O}(T'TM_{\text{BS}})$ which scales linearly with the
size of observed tensor $\boldsymbol{\mathcal{Y}}$. We can also
easily show that solving the least squares problems
(\ref{updateB}) and (\ref{updateC}) requires flops of order
$\mathcal{O}(T'TM_{\text{BS}})$ as well. To solve
(\ref{Hu-estimation3}), a fast iterative shrinkage-thresholding
algorithm (FISTA) \cite{BeckTeboulle09} can be used. The main
computational task associated with the FISTA algorithm at each
iteration is to evaluate a so-called proximal operator whose
computational complexity is of the order $\mathcal{O}(n^2)$, where
$n$ denotes the number of columns of the overcomplete dictionary.
For our case, the computational complexity is of order
$\mathcal{O}(N_1^{2}N_2^{2})$. Thus the overall computational
complexity is $\mathcal{O}(N_1^{2}N_2^{2}+T'TM_{\text{BS}})$.

For the direct compressed sensing-based method discussed in
Section \ref{sec:cs-method}, the main computational task
associated with the FISTA algorithm at each iteration is to
evaluate the proximal operator whose computational complexity, as
indicated earlier, is of the order $\mathcal{O}(n^2)$, where $n$
denotes the number of columns of the overcomplete dictionary. For
the compressed sensing problem considered in
(\ref{L1-minimization}), we have $n=N_{1}N_{2}U$. Thus the
required number of flops at each iteration of the FISTA is of
order $\mathcal{O}(N_1^{2}N_2^{2}U^2)$, which scales quadratically
with $N_{1}N_{2}U$. Note that the overcomplete dictionary
${\boldsymbol S}\otimes\boldsymbol{\Phi}$ in
(\ref{L1-minimization}) is of dimension $TT'M_{\text{BS}}\times
N_{1}N_{2}U$. In order to achieve a substantial overhead
reduction, the parameters $\{M_{\text{BS}}, T, T'\}$ are usually
chosen such that the number of measurements is far less than the
dimension of the sparse signal, i.e. $TT'M_{\text{BS}}\ll U
N_{1}N_{2}$. Therefore the compressed sensing-based method has a
higher computational complexity than the proposed CP
decomposition-based method.






\section{Simulation Results}\label{sec:experiments}
We now present simulation results to illustrate the performance of
our proposed CP factorization-based method (referred to as CPF),
and its comparison with the direct compressed sensing-based method
(referred to as CS) discussed in Section \ref{sec:cs-method}. For
the CPF method, $\mu$ in (\ref{opt-1}) is chosen to be $3\times
10^{-3}$ throughout our experiments. In fact, empirical results
suggest that stable recovery performance can be achieved when
$\mu$ is set in the range $[10^{-3}, 10^{-2}]$. We consider a
system model consisting of a BS and $U$ MSs, with the BS employing
a uniform linear array of $N_{\text{BS}}=64$ antennas and each MS
employing a uniform linear array of $N_{\text{MS}}=32$ antennas.
We set $U=8$. The mmWave channel is assumed to follow a geometric
channel model with the AoAs and AoDs distributed in $[0,2\pi]$.
The complex gain $\alpha_{u,l}$ is assumed to be a random variable
following a circularly-symmetric Gaussian distribution
$\alpha_{u,l}\sim\mathcal{CN}(0,N_{\text{BS}}N_{\text{MS}}/\rho)$,
where $\rho$ is given by $\rho=(4\pi d f_c/c)^2$, here $c$
represents the speed of light, $d$ denotes the distance between
the MS and the BS, and $f_c$ is the carrier frequency. We assume
$d=50$ and $f_c=28$GHz. In our simulations, we investigate the
performance of the proposed method under two randomly generated
mmWave channels. For the first mmWave channel, the AoAs and AoDs
associated with the $U$ users are closely-spaced (see Fig.
\ref{fig2} (a)), while the AoAs and AoDs associated with the $U$
users are sufficiently separated for the other mmWave channel (see
Fig. \ref{fig2} (b)). The total number of paths is set to $L=13$
and the number of scatterers between each MS and the BS, $L_u$, is
set equal to one or two. The beamforming matrix $\boldsymbol{P}$
and the combining matrix $\boldsymbol{Q}$ are generated according
to the way described in Section \ref{sec:uniqueness-analysis}. The
pilot symbol matrix $\boldsymbol{S}$ is chosen from the codebook
of Grassmannian beamforming \cite{LoveHeath03} for $T=2$, while
for $T=3$, $T=4$ and $T=6$, $\boldsymbol{S}$ can be calculated by
the algorithm proposed in \cite{MedraDavidson14}. When $T=8$,
$\boldsymbol{S}$ is simply chosen as a DFT matrix.

\begin{figure*}[!t]
 \centering
\subfigure[The set of AoAs and AoDs associated with the first
channel]{\includegraphics[width=3.5in]{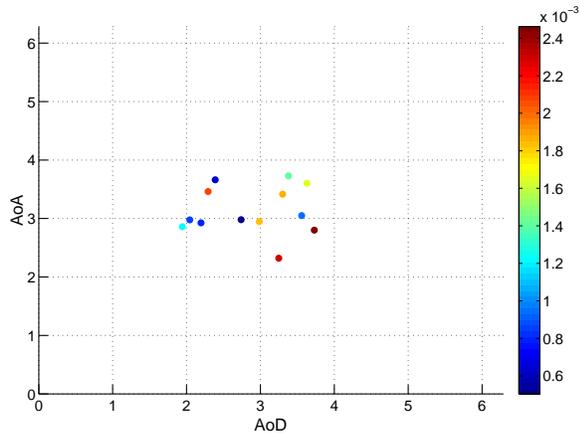} \label{fig2a}}
 \hfil
\subfigure[The set of AoAs and AoDs associated with the second
channel]{\includegraphics[width=3.5in]{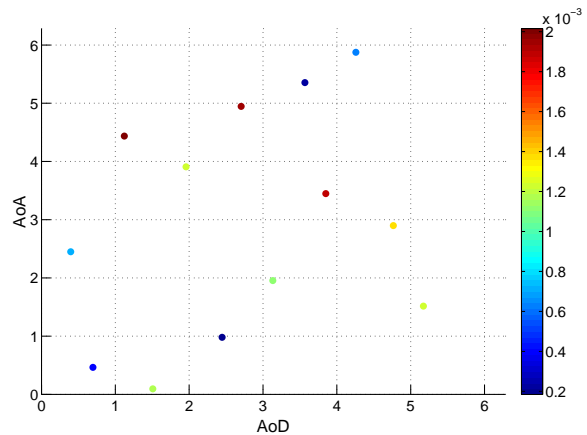}}
  \caption{Two sets of AoAs/AoDs realizations.}
   \label{fig2}
\end{figure*}

The estimation performance is evaluated by the normalized mean
squared error (NMSE) which is calculated as
\begin{align}
\text{NMSE}=\frac{\sum_{u=1}^U\|\boldsymbol{H}_u-\boldsymbol{\hat{H}}_u\|_F^2}{\sum_{u=1}^U\|\boldsymbol{H}_u\|_F^2}
\end{align}
where $\boldsymbol{\hat{H}}_u$ denotes the estimated channel. The
signal-to-noise ratio (SNR) is defined as the ratio of the signal
component to the noise component, i.e.
\begin{align}
\text{SNR}\triangleq
\frac{\|\boldsymbol{\mathcal{Y}}-\boldsymbol{\mathcal{W}}\|_F^2}{\|\boldsymbol{\mathcal{W}}\|_F^2}
\end{align}
where $\boldsymbol{\mathcal{Y}}$ and $\boldsymbol{\mathcal{W}}$
represent the received signal and the additive noise in
(\ref{CP}), respectively.


We first examine the channel estimation performance under
different SNRs. Fig. \ref{fig3} plots the estimation accuracy as a
function of SNR. Note that the compressed sensing method requires
to discretize the parameter space into a finite set of grid
points, and the true parameters may not lie on the discretized
grid. To illustrate the tradeoff between the estimation accuracy
and the computational complexity for the CS method, we employ two
different grids to discretize the continuous parameter space: the
first grid (referred to as Grid-I) discretizes the AoA-AoD space
into $64\times 32$ grid points, and the second grid (referred to
as Grid-II) discretizes the AoA-AoD space into $128\times 64$ grid
points. For our proposed CPF method, after the factor matrices are
estimated, a compressed sensing method is also used to estimate
each user's channel. Nevertheless, since the problem has been
decoupled into a set of single user's channel estimation problems
via CP factorization, the size of the overcomplete dictionary
involved in compressed sensing is now much smaller. Hence a finer
grid can be employed. In our simulations, we use a grid of
$256\times 128$ for our proposed method. Table \ref{table1} shows
that even using such a fine grid, our proposed method still
consumes much less average run times as compared with the CS
method which uses a grid of $128\times 64$. From Fig. \ref{fig3},
we see that our proposed method presents a clear performance
advantage over the CS method that employs the finer grid of the
two choices. The performance gain is possibly due to the following
two reasons. Firstly, our proposed method exploits intrinsic
multi-dimensional structure of the multiway data. Secondly, our
method benefits from the fact that the CP decomposition, which
serves as a critical step of our method, is essentially an
off-grid approach which does not suffer from grid mismatches. We
also observe that the CS method achieves a performance improvement
by employing a finer grid. Nevertheless, the required average
runtime increases drastically when a finer grid is used (see Table
\ref{table1}).

\begin{figure*}[!t]
 \centering
\subfigure[Channel
I]{\includegraphics[width=3.5in]{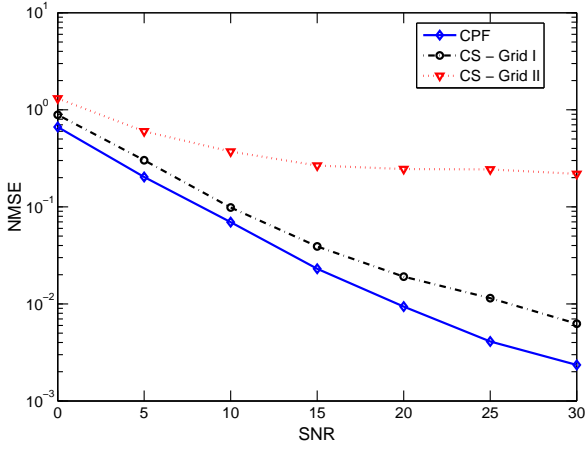}}
 \hfil
\subfigure[Channel
II]{\includegraphics[width=3.5in]{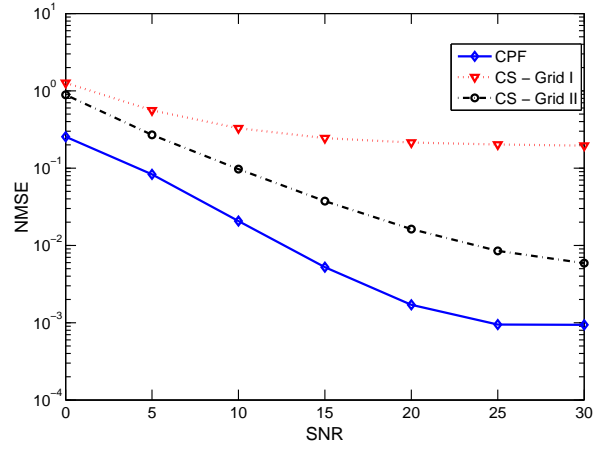}}
  \caption{NMSE versus SNR, $M_{BS}=16$, $T'=16$, and $T=4$.}
   \label{fig3}
\end{figure*}

Next, we examine how the estimation performance depends on the
parameters $T$, $M_{\text{BS}}$ and $T'$. Fig. \ref{fig4} shows the
NMSEs of respective algorithms as $T$ varies from 2 to 8, and the
other two parameters $T'$ and $M_{\text{BS}}$ are fixed to be
$T'=16$ and $M_{\text{BS}}=16$. Since $T'>U$ and $M_{BS}>U$, the
generalized Kruskal's condition (\ref{g-Kruskal-condition}) can be
satisfied when $k_{\boldsymbol{S}}\geq2$, that is, $T\geq 2$. From
Fig. \ref{fig4}, we see that when $T>2$, our proposed method is able to
provide a reliable channel estimate. This result roughly coincides
with our previous analysis regarding the uniqueness of the CP
decomposition. We also observe that better estimation performance
can be achieved for the latter mmWave channel. This is expected
since the mutual coherence of the factor matrices
$\boldsymbol{A}_Q$, $\boldsymbol{A}_P$ becomes lower as the
AoAs/AoDs are more sufficiently separated. As a result, the CP
factorization can be accomplished with a higher accuracy.


\begin{figure*}[!t]
\centering \subfigure[Channel
I]{\includegraphics[width=3.5in]{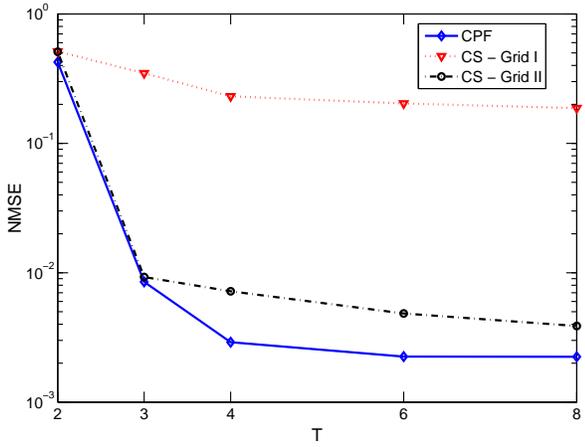}} \hfil
\subfigure[Channel
II]{\includegraphics[width=3.5in]{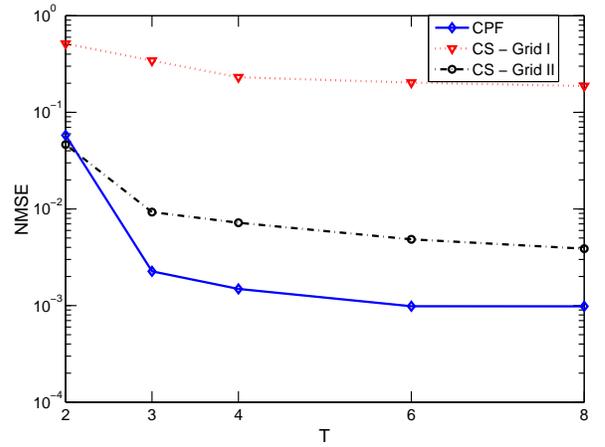}}
\caption{NMSE versus $T$, $M_{BS}=16$, $T'=16$, SNR=30dB.}
\label{fig4}
\end{figure*}

Fig. \ref{fig5} depicts the NMSEs of respective algorithms as a
function of $M_{\text{BS}}$, where we set $T'=16$, $T=4$, and
$\text{SNR}=30\text{dB}$. To satisfy (\ref{g-Kruskal-condition}),
it is easy to know that $M_{\text{BS}}$ should be greater than or
equal to 11. From Fig. \ref{fig5}, we see that our simulation
results again roughly corroborate our analysis: the proposed
method provides a decent estimation accuracy when the generalized
Kruskal's is satisfied, i.e. $M_{\text{BS}}> 11$. Also, our
proposed method outperforms the compressed sensing method by a
considerable margin. In Fig. \ref{fig6}, we plot the estimation
accuracy of respective algorithms as a function of $T'$, where we
set $M_{\text{BS}}=16$, $T=4$, and $\text{SNR}=30\text{dB}$.
Similar conclusions can be made from this figure.

\begin{figure*}[!t]
\centering \subfigure[Channel
I]{\includegraphics[width=3.5in]{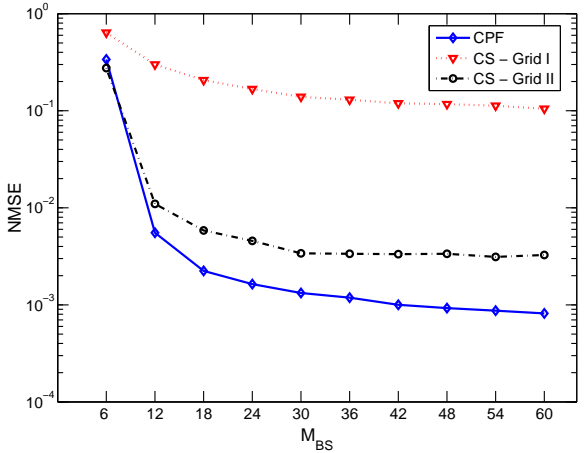}} \hfil
\subfigure[Channel
II]{\includegraphics[width=3.5in]{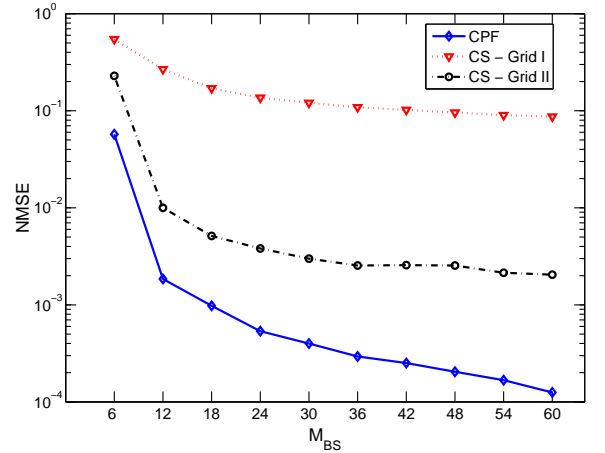}}
\caption{NMSE versus $M_{BS}$, $T^{\prime}=16$, $T=4$, SNR=30dB.}
\label{fig5}
\end{figure*}

\begin{figure*}[!t]
\subfigure[Channel
I]{\includegraphics[width=3.5in]{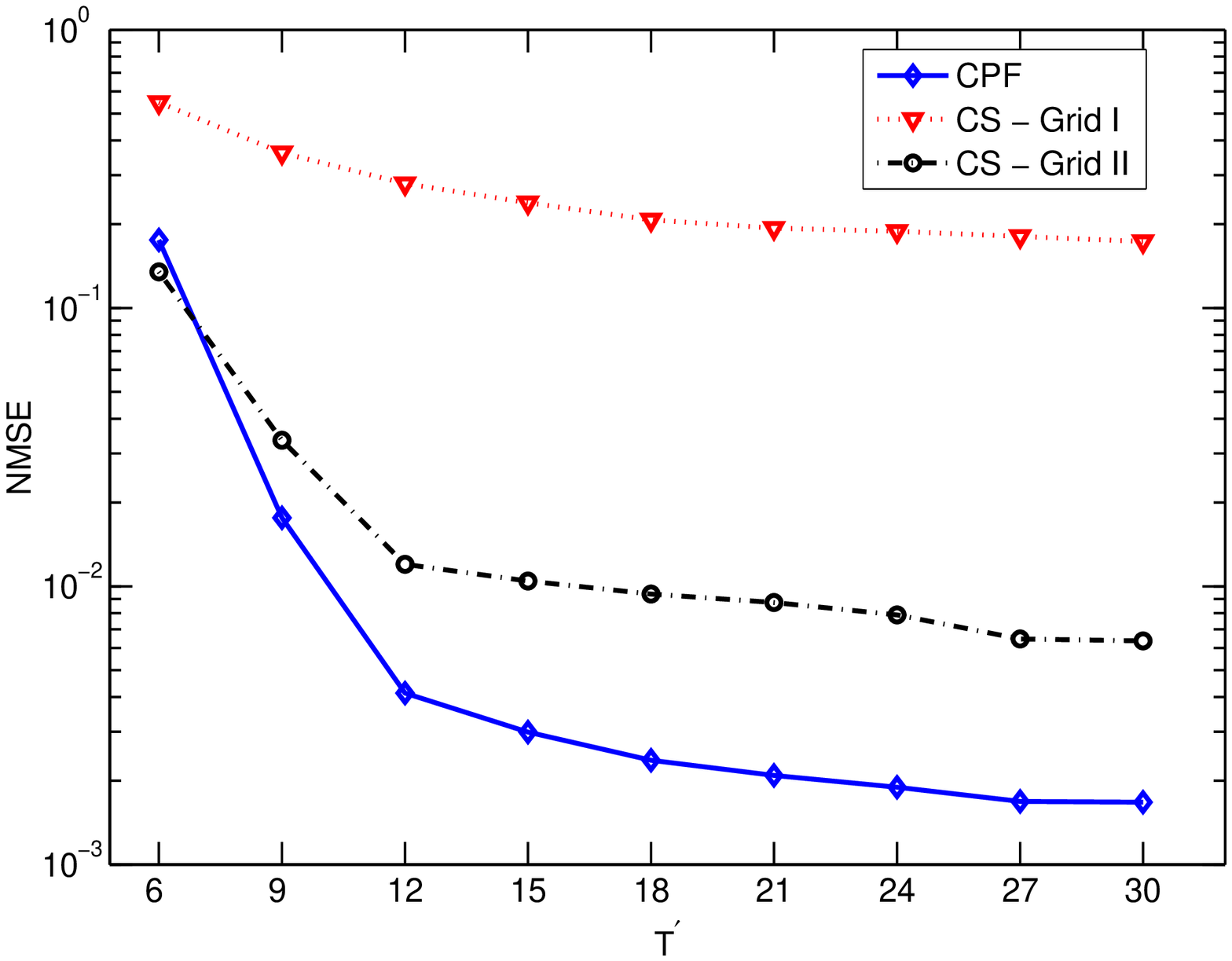}}
\hfil \subfigure[Channel
II]{\includegraphics[width=3.5in]{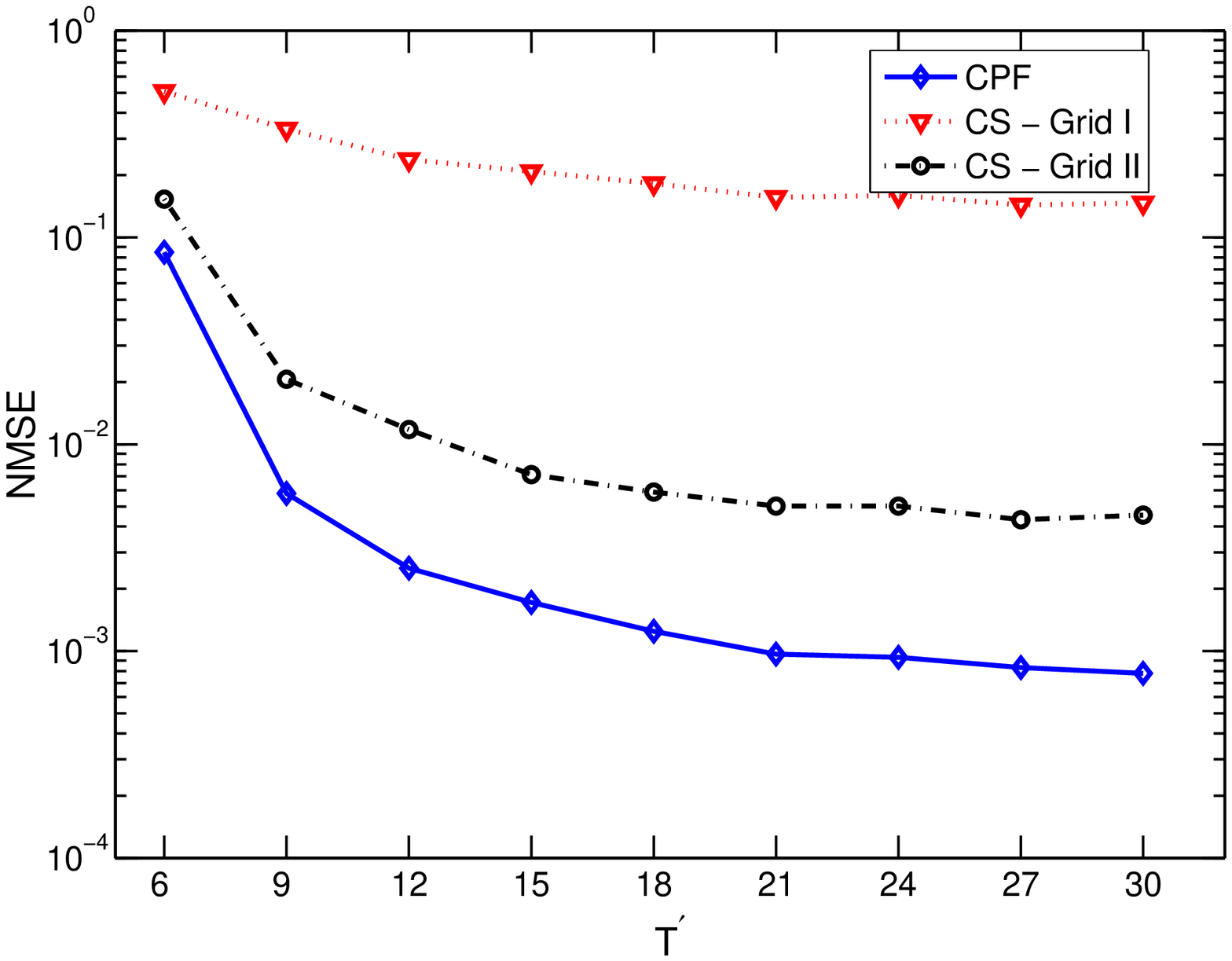}}
\caption{NMSE versus $T^{\prime}$, $M_{BS}=16$, $T=4$, SNR=30dB.}
\label{fig6}
\end{figure*}

Table \ref{table1} shows the average run times of our proposed
method and the compressed sensing method. We see that the
computational complexity of the compressed sensing method grows
dramatically as the dimension of the grid increases. Our proposed
method is more computationally efficient than the compressed
sensing method. It takes similar run times as the direct
compressed sensing method which employs the coarser grid of the
two choices, meanwhile achieving a better estimation accuracy than
the compressed sensing method that uses the finer grid.

\begin{table}[!t]
\caption{Average run times of respective algorithms, $T'=16$,
$M_{BS}=16$, $T=4$}
\renewcommand{\arraystretch}{1.3}
\centering
\begin{tabular}{c|c||c|c|c|c}
\hline
\multirow{2}*{\bfseries{ALG}}&\multirow{2}*{\bfseries{Grid}}
&\multicolumn{2}{| c |}{\bfseries{ NMSE}}&\multicolumn{2}{| c }{\bfseries{ Average Run Time(s)}}\\
\cline{3-6}
 & &Channel I&Channel II&Channel I&Channel II\\
\hline
{\multirow{2}*{CS}}&$64\times32$&$2.5e-1$&$2.3e-1$&$16.5$&$11$\\
\cline{2-6}
 &$128\times64$&$6.7e-3$&$6.4e-3$&$270$&$220$\\
\hline
CPF&-&$2.7e-3$&$1.5e-3$&$23$&$19$\\
\hline
\end{tabular}
\label{table1}
\end{table}

\section{Conclusions}\label{sec:conclusion}
We proposed a layered pilot transmission scheme and a
CANDECOMP/PARAFAC (CP) decomposition-based method for uplink
multiuser channel estimation in mm-Wave MIMO systems. The joint
uplink multiuser channel estimation was formulated as a tensor
decomposition problem. The uniqueness of the CP decomposition was
investigated for both the single-path geometric model and the
general geometric model. The conditions for the uniqueness of the
CP decomposition shed light on the design of the beamforming
matrix and the combining matrix, and meanwhile provide general
guidelines for choosing the system parameters. The proposed method
is able to achieve an additional training overhead reduction as
compared with a conventional scheme which separately estimates
multiple users' channels. Simulation results show that our
proposed method presents a clear performance advantage over the
compressed sensing method, and meanwhile achieving a substantial
computational complexity reduction.

\bibliography{newbib}

\begin{thebibliography}{10}
\providecommand{\url}[1]{#1}
\csname url@samestyle\endcsname
\providecommand{\newblock}{\relax}
\providecommand{\bibinfo}[2]{#2}
\providecommand{\BIBentrySTDinterwordspacing}{\spaceskip=0pt\relax}
\providecommand{\BIBentryALTinterwordstretchfactor}{4}
\providecommand{\BIBentryALTinterwordspacing}{\spaceskip=\fontdimen2\font plus
\BIBentryALTinterwordstretchfactor\fontdimen3\font minus
  \fontdimen4\font\relax}
\providecommand{\BIBforeignlanguage}[2]{{%
\expandafter\ifx\csname l@#1\endcsname\relax
\typeout{** WARNING: IEEEtran.bst: No hyphenation pattern has been}%
\typeout{** loaded for the language `#1'. Using the pattern for}%
\typeout{** the default language instead.}%
\else
\language=\csname l@#1\endcsname
\fi
#2}}
\providecommand{\BIBdecl}{\relax}
\BIBdecl

\bibitem{RanganRappaport14}
S.~Rangan, T.~S. Rappaport, and E.~Erkip, ``Millimeter-wave cellular wireless
  networks: Potentials and challenges,'' \emph{Proceedings of the IEEE}, vol.
  102, no.~3, pp. 366--385, March 2014.

\bibitem{SwindlehurstAyanoglu14}
A.~L. Swindlehurst, E.~Ayanoglu, P.~Heydari, and F.~Capolino, ``Millimeter-wave
  massive {MIMO}: the next wireless revolution?'' \emph{IEEE Commun. Mag.},
  vol.~52, no.~9, pp. 56--62, September 2014.

\bibitem{AlkhateebMo14}
A.~Alkhateeb, J.~Mo, N.~Gonzalez-Prelcic, and R.~Heath, ``{MIMO} precoding and
  combining solutions for millimeter-wave systems,'' \emph{IEEE Commun. Mag.},
  vol.~52, no.~12, pp. 122--131, December 2014.

\bibitem{Marzetta10}
T.~L. Marzetta, ``Noncooperative cellular wireless with unlimited numbers of
  base station antennas,'' \emph{IEEE Trans. Wireless Commun.}, vol.~9, no.~11,
  pp. 3590--3600, November 2010.

\bibitem{AlkhateebyLeus15}
A.~Alkhateeb, G.~Leus, and R.~Heath, ``Compressed sensing based multi-user
  millimeter wave systems: How many measurements are needed?'' in \emph{40th
  IEEE Inter. Conf. on Acoust., Speech and Signal Process. (ICASSP)}, Brisbane,
  Australia, April 2015, pp. 2909--2913.

\bibitem{AlkhateebAyach14}
A.~Alkhateeb, O.~E. Ayach, G.~Leus, and R.~Heath, ``Channel estimation and
  hybrid precoding for millimeter wave cellular systems,'' \emph{IEEE J. Sel.
  Topics Signal Process.}, vol.~8, no.~5, pp. 831--846, October 2014.

\bibitem{SchniterSayeed14}
P.~Schniter and A.~Sayeed, ``Channel estimation and precoder design for
  millimeter-wave communications: The sparse way,'' in \emph{Proc. 48th
  Asilomar Conf. Signals, Syst. Comput.}, Pacific Grove, CA, USA, November
  2014, pp. 273--277.

\bibitem{BradyBehdad13}
J.~Brady, N.~Behdad, and A.~M. Sayeed, ``Beamspace {MIMO} for millimeter-wave
  communications: system architecture, modeling, analysis, and measurements,''
  \emph{IEEE Trans. Antennas and Propagation}, vol.~61, no.~7, pp. 3814--3827,
  July 2013.

\bibitem{AlkhateebLeus15b}
A.~Alkhateeb, G.~Leus, and R.~Heath, ``Limited feedback hybrid precoding for
  multi-user millimeter wave systems,'' \emph{IEEE Trans. Wireless Commun.},
  vol.~14, no.~11, pp. 6481--6494, November 2015.

\bibitem{SidiropoulosGiannakis00}
N.~D. Sidiropoulos, G.~B. Giannakis, and R.~Bro, ``Blind parafac receivers for
  {DS-CDMA} systems,'' \emph{IEEE Trans. Signal Process.}, vol.~48, no.~3, pp.
  810--823, March 2000.

\bibitem{RongVorobyov05}
Y.~Rong, S.~A. Vorobyov, A.~B. Gershman, and N.~D. Sidiropoulos, ``Blind
  spatial signature estimation via time-varying user power loading and parallel
  factor analysis,'' \emph{IEEE Trans. Signal Process.}, vol.~53, no.~5, pp.
  1697--1710, May 2005.

\bibitem{RoemerHaardt10}
F.~Roemer and M.~Haardt, ``Tensor-based channel estimation and iterative
  refinements for two-way relaying with multiple antennas and spatial reuse,''
  \emph{IEEE Trans. Signal Process.}, vol.~58, no.~11, pp. 5720--5735, November
  2010.

\bibitem{SidiropoulosBro00}
N.~D. Sidiropoulos, R.~Bro, and G.~B. Giannakis, ``Parallel factor analysis in
  sensor array processing,'' \emph{IEEE Trans. Signal Process.}, vol.~48,
  no.~8, pp. 2377--2388, August 2000.

\bibitem{NionSidiropoulos10}
D.~Nion and N.~D. Sidiropoulos, ``Tensor algebra and multidimensional harmonic
  retrieval in signal processing for {MIMO} radar,'' \emph{IEEE Trans. Signal
  Process.}, vol.~58, no.~11, pp. 5693--5705, November 2010.

\bibitem{HaardtRoemer08}
M.~Haardt, F.~Roemer, and G.~D. Galdo, ``Higher-order svd-based subspace
  estimation to improve the parameter estimation accuracy in multidimensional
  harmonic retrieval problems,'' \emph{IEEE Trans. Signal Process.}, vol.~56,
  no.~7, pp. 3198--3213, July 2008.

\bibitem{CichockiMandic15}
A.~Cichocki, D.~P. Mandic, A.~H. Phan, C.~F. Caiafa, G.~Zhou, Q.~Zhao, and
  L.~D. Lathauwer, ``Tensor decompositions for signal processing applications:
  From two-way to multiway component analysis,'' \emph{IEEE Signal Process.
  Mag.}, vol.~32, no.~2, pp. 145--163, March 2015.

\bibitem{AkdenizLiu14}
M.~R. Akdeniz, Y.~Liu, M.~K. Samimi, S.~Sun, S.~Rangan, T.~S. Rappaport, and
  E.~Erkip, ``Millimeter wave channel modeling and cellular capacity
  evaluation,'' \emph{IEEE J. Sel. Areas Commun.}, vol.~32, no.~6, pp.
  1164--1179, June 2014.

\bibitem{BazerqueMateos13}
J.~A. Bazerque, G.~Mateos, and G.~B. Giannakis, ``Rank regularization and
  bayesian inference for tensor completion and extrapolation,'' \emph{IEEE
  Trans. Signal Process.}, vol.~61, no.~22, pp. 5689--5703, November 2013.

\bibitem{Kruskal77}
J.~B. Kruskal, ``Three-way arrays: rank and uniqueness of trilinear
  decompositions, with application to arithmetic complexity and statistics,''
  \emph{Linear Algebra and its Appl.}, vol.~18, no.~2, pp. 95--138, 1977.

\bibitem{StegemanSidiropoulos07}
A.~Stegeman and N.~D. Sidiropoulos, ``On kruskal¡¯s uniqueness condition for
  the candecomp/parafac decomposition,'' \emph{Linear Algebra and its Appl.},
  vol. 420, no. 2-3, pp. 540--552, January 2007.

\bibitem{ConwayHardin96}
J.~H. Conway, R.~H. Hardin, and N.~J. Sloane, ``Packing lines, planes, etc.:
  Packings in grassmannian spaces,'' \emph{Experimental Mathematics}, vol.~5,
  no.~2, pp. 139--159, 1996.

\bibitem{StrohmerThomas03}
T.~Strohmer and R.~Heath, ``Grassmannian frames with applications to coding and
  communication,'' \emph{Applied and Computational Harmonic Analysis}, vol.~14,
  no.~3, pp. 257--275, May 2003.

\bibitem{Lathauwer08}
L.~D. Lathauwer, ``Decompositions of a higher-order tensor in block
  terms$-$part {II}: Definitions and uniqueness,'' \emph{SIAM. J. Matrix Anal.
  Appl.}, vol.~30, no.~3, pp. 1033--1066, September 2008.

\bibitem{BeckTeboulle09}
A.~Beck and M.~Teboulle, ``A fast iterative shrinkage-thresholding algorithm
  for linear inverse problems,'' \emph{SIAM J. Imaging Sci.}, vol.~2, no.~1,
  pp. 183--202, March 2009.

\bibitem{LoveHeath03}
D.~J. Love, R.~Heath, and T.~Strohmer, ``Grassmannian beamforming for
  multiple-input multiple-output wireless systems,'' \emph{IEEE Trans. Inform.
  Theory}, vol.~49, no.~10, pp. 2735--2747, October 2003.

\bibitem{MedraDavidson14}
A.~Medra and T.~N. Davidson, ``Flexible codebook design for limited feedback
  systems via sequential smooth optimization on the grassmannian manifold,''
  \emph{IEEE Trans. Signal Process.}, vol.~62, no.~5, pp. 1305--1318, March
  2014.

\end{thebibliography}
\bibliographystyle{IEEEtran}

\end{document}